\DeclareMathOperator*{\argmin}{argmin}
\newtheorem*{ESS}{Equal Sum Subsets problem (ESS)}
\newtheorem*{FSSR}{Family of Subset-Sum Ratio problems (F-SSR)}
\newtheorem*{SRFSSR}{Family of Semi - Restricted Subset-Sum Ratio Problems \\ (\textsl{Semi-Restricted F-SSR})}
\newtheorem*{TSSSR}{Two-Set Subset-Sum Ratio problem (2-Set SSR)}
\newtheorem*{facSSR}{Factor-$r$ Subset-Sum Ratio problem (Factor-$r$ SSR)}
\newtheorem*{SSRmatroids}{Subset-Sum Ratio with Matroid constraints}
\newtheorem*{SSR}{Subset-Sums Ratio problem (SSR)}
\begin{document}
\mainmatter              % start of a contribution
\title{Approximation Schemes for Subset Sum Ratio Problems}
\titlerunning{FPTAS for SSR Variations}  % abbreviated title (for running head)
%                                     also used for the TOC unless
%                                     \toctitle is used
%
\author{Nikolaos Melissinos\inst{1}, Aris Pagourtzis\inst{2} \and Theofilos Triommatis\inst{3}}
\authorrunning{Melissinos et al.} % abbreviated author list (for running head)Polytechnioupoli, 15780 Zografou, Athens, Greece
%
%%%% list of authors for the TOC (use if author list has to be modified)
\tocauthor{Nikolaos Melissinos, Aris Pagourtzis, Theofilos Triommatis}

\institute{
Universit\'e Paris-Dauphine, PSL University, CNRS, LAMSADE, \\75016 Paris, France,\\
\email{{nikolaos.melissinos@dauphine.eu}}
\and
School of Electrical and Computer Engineering\\
National Technical University of Athens\\
Polytechnioupoli, 15780 Zografou, Athens, Greece,\\
\email{pagour@cs.ntua.gr}
\and
School of Electrical Engineering, Electronics and Computer Science\\ University of Liverpool, 
Liverpool, L69-3BX, UK \\
\email{Theofilos.Triommatis@liverpool.ac.uk}
}

\maketitle              % typeset the title of the contribution

\begin{abstract}
We consider the Subset Sum Ratio Problem ($SSR$), in which given a set of integers the goal is to find two subsets such that the ratio of their sums is as close to~1 as possible, and introduce a family of variations that capture additional meaningful requirements. Our main contribution is a generic framework that yields fully polynomial time approximation schemes (FPTAS) for problems in this family that meet certain conditions. We use our framework to design explicit FPTASs for two such problems, namely \textsl{Two-Set Subset-Sum Ratio} and \textsl{Factor-$r$ Subset-Sum Ratio}, with  running time $\mathcal{O}(n^4/\varepsilon)$, which coincides with the best known running time for the original $SSR$ problem~\cite{MelissinosP18}. 
%in [Nikolaos Melissinos and Aris Pagourtzis, A Faster FPTAS for the Subset-Sums Ratio Problem. \emph{Computing and Combinatorics - 24th International Conference, COCOON 2018}, Qing Dao, China, July 2-4, 2018, Proceedings: 602-614]. 
%This framework seems efficient as the best known approximation algorithm for $SSR$ has the same running time as the one that

\keywords{approximation scheme, subset-sums ratio, knapsack problems, 
combinatorial optimization}
\end{abstract}

\begin{section}{Introduction}
\label{secIntro}
Subset sum computations are of key importance in computing, as they appear either as standalone tasks or as subproblems in a vast amount of theoretical and practical methods coping with important computational challenges. As most of subset sum problems are NP-hard, an effort was made over the years to come up with  systematic ways of deriving approximation schemes for such problems.
%These techniques could be used as an algorithmic tool to produce approximating algorithms for specific families of problems instead of attacking each problem individually. 
Important contributions in this direction include works by Horowitz and Sahni~\cite{HorowitzS74}, \cite{HorowitzS76}, Ibarra and Kim~\cite{IbarraK75}, Sahni~\cite{Sahni76}, Woeginger~\cite{Woeginger00} and Woeginger and Pruhs~\cite{PruhsW07}. Inspired by these works we define and study families of variations of the Subset Sum Ratio Problem 
which is a combinatorial optimization problem  introduced and shown NP-hard by Woeginger and Yu~\cite{wo:yu}. The formal definition of the problem is as follows:  

\begin{SSR}
Given a set $A = \{a_1,\ldots,a_n\}$ of $n$ positive integers, find two 
nonempty and disjoint sets $ S_1, S_2 \subseteq \{1,\ldots,n\}$ that minimize the ratio 
\begin{align*}
 &\frac{\max\{\sum_{i \in S_1} a_i, \sum_{j \in S_2} a_j\}}{\min\{\sum_{i \in S_1} a_i, \sum_{j \in S_2} a_j\}} 
\end{align*}
%\begin{align*}
% &\frac{\max\{sum_A(S_1), sum_A(S_2) \}}{\min\{sum_A(S_1), sum_A(S_2)\}} 
%\end{align*}
\end{SSR}

%where we use the following notation:

%\[ sum_A(S) = \sum_{i \in S} a_i \]

One of our motives to study $SSR$ stems from the fact that it is the optimization version of the decision problem Equal Subset Sum (ESS) which is related to various other concepts and problems as we explain below. ESS is defined as follows: 

\begin{ESS}
Given a set $A = \{a_1,\ldots,a_n\}$ of $n$ positive integers, are there two 
nonempty and disjoint sets $ S_1$, $S_2$ $\subseteq \{1,\ldots,n\}$ such that 
$ \sum_{i \in S_1} a_i = \sum_{j \in S_2} a_j \mbox{?} $
\end{ESS}

%\paragraph{Related Work.} 
Even if this problem has been in the literature for many years, it is still being studied with a recent work begin \cite{mu:ne:pa:we}.
Variations of this problem have been studied and proven NP-hard by Cieliebak
\textsl{et al.}\ in~\cite{cie:eid:pag,cie:eid:pag:sch}, where  pseudo - polynomial time algorithms were also presented for many of these problems.
ESS is a
fundamental problem appearing in many scientific areas. For example, it is related to the Partial Digest problem that comes from molecular biology~\cite{cie:eid:pen,cie:eid}, 
to allocation mechanisms~\cite{lip:mar:mos:sab}, to tournament construction~\cite{khan}, to a variation of the 
Subset Sum problem, namely the Multiple 
Integrated Sets SSP, which finds applications in cryptography~\cite{vol}.
A restricted
version of $ESS$, namely when the sum of the input values is strictly less than
$2^n-1$ is guaranteed to have a solution, however it is not known how to find it;
this version belongs to the complexity class PPP \cite{pap94}.

The first FPTAS for $SSR$ was introduced by Bazgan \textsl{et al.} in \cite{baz:san:tuz} and more recently a simpler but slower FPTAS was introduced in~\cite{Nanon} and a faster one in~\cite{MelissinosP18}; the latter is the fastest known so far for the problem. Variations of ESS were studied and shown NP-hard in~\cite{cie:eid:pag:sch:tech,cie:eid:pag,cie:eid:pag:sch}, where also pseudo - polynomial time algorithms were presented for some of them and it was left open whether the corresponding optimization problems admit an FPTAS. Here we address that question in the affirmative for two of those problems (namely for \textsl{Equal Sum Subsets From Two Sets}~\cite{cie:eid:pag:sch:tech}\footnote{It is not hard to show that the optimization version of \textsl{Equal Sum Subsets From Two Sets} can be reduced to \textsl{Two-Set Subset-Sum Ratio} for which we provide an FPTAS here.} and for \textsl{Factor-$r$ Sum Subsets}~\cite{cie:eid:pag:sch}) and provide a framework that can be potentially  used to give an FPTAS for most
of the remaining ones, if not for all, as well as for many other subset sum ratio problems. 

Let us note that, for the exemplary problems that we study here there may exist more efficient approximation algorithms, e.g.\ by using techniques such as those in \cite{jin19,chan18} for knapsack, however it is not clear if and how such techniques can be adapted in a generic way to take into account the additional restrictions that are captured by our framework. Moreover, our primary goal is to provide an as generic as possible framework to cope with such problems, at the cost of sacrificing optimality in  efficiency.

Our results and organization of the paper are as follows. 
In section~\ref{families} we define two families of variations of $SSR$ problems that are able to capture additional restrictions. Our main result, presented in section~\ref{SecExFPTAS}, is a method to obtain an FPTAS for any problem in these families the definition of which meets certain conditions. In the last two sections we use our framework to present FPTASs  for two variations of $SSR$, namely \textsl{2-Set SSR} and \textsl{Factor-$r$ Subset Sum Ratio}; to the best of our knowledge, no approximation algorithm was known so far for these problems.

\end{section}

\begin{section}{Families of Variations of SSR}
\label{families}
In this section we shall define two families of variations of the $SSR$ problem. In~\cite{MelissinosP18}, the function $\mathcal{R}(S_1,S_2,A)$ was defined as: 
\begin{definition}[Ratio of two subsets]
Given a set $A = \{a_1,\ldots,a_n\}$ of $n$ positive integers and two 
sets $ S_1, S_2 \subseteq \{1,\ldots,n\}$ we define 
$\mathcal{R}(S_1,S_2,A)$ as follows:
\begin{align*}
 \mathcal{R}(S_1,S_2,A)= 
 \begin{cases}
  0 & \mbox{ if } S_1 = \emptyset \mbox{ and } S_2 \neq \emptyset \\
  \frac{\sum_{i \in S_1} a_i}{\sum_{i \in S_2} a_i} &\mbox{ if } S_2 \neq \emptyset,\\
  +\infty &\mbox{ otherwise.}
 \end{cases}
\end{align*}
\end{definition}

Here we will also define and use  $\mathcal{MR}(S_1,\ldots,S_k,A)$ which is a generalization of $\mathcal{R}(S_1,S_2,A)$ to $k>2$ sets:
\begin{definition}[Max ratio of $k$ subsets]
Given a set $A = \{a_1,\ldots,a_n\}$ of $n$ positive integers and two 
sets $ S_1,\ldots, S_k \subseteq \{1,\ldots,n\}$ we define:
\begin{align*}
 \mathcal{MR}(S_1,\ldots,S_k,A)= \max\{ \mathcal{R}(S_i,S_j,A)\mid i\neq j  \mbox{ and } i,j \in \{1,\dots,k\}\}\enspace
\end{align*}
\end{definition}

In order to keep our expressions as simple as possible we will use the above functions throughout the whole paper.  

Let us now define the first family of variations of $SSR$. We want this family to contain as many problems as possible. In a general case we may not have just a set of numbers as our input but a graph or that has a weights on the edges or the vertices. For such reasons we will use the following notation.

\begin{FSSR}  \label{fssr}
A problem $\mathcal{P}$ in $F$-$SSR$ is a combinatorial optimization problem $(\mathcal{I}, k, \mathcal{F})$ where:
%whose instances $\mathcal{I}=(A,k)$ consist of:\\
\begin{itemize}
    \item $\mathcal{I}$ is a set of instances each of which is a pair $(E,w)$ where $E=\{e_1,\ldots ,e_n\}$ is a set of ground elements and $w: E \mapsto \mathbb{R}^+ $ is a weight function which maps every element $e_i$ to a positive number $a_i$;
    \item $k$ defines the number of subsets of $\{1,\ldots,n\}$ we are searching for;
    \item $\mathcal{F}$ gives the set of feasible solutions as follows: for any input $(E,w)$, $\mathcal{F}(k,E)$ is a collection of $k$-tuples of nonempty and disjoint subsets of $\{1,\ldots,$ $n\}$, and
    given $(k,E, (S_1,\ldots,S_k))$ we can check in polynomial time whether $(S_1,\ldots,$ $S_k)$ $\in \mathcal{F}(k,E)$.
    %there exists a deterministic polynomial-time Turing Machine which on input $(k,E, (S_1,\ldots,S_k))$ decides whether $(S_1,\ldots,S_k)\in \mathcal{F}(k,E)$.
\end{itemize}

\noindent
The goal of $\mathcal{P}$ is to find for an instance $(E,w)$ a feasible solution $(S^*_1,\ldots,S^*_k)$ such that
\begin{align*}
\mathcal{MR}(S^*_1,\ldots,S^*_k, A )= \min\{\mathcal{MR}(S_1,\ldots,S_k, A) \mid (S_1,\ldots,S_k) \in \mathcal{F}(k,E) \} 
\end{align*}
where $A=\{a_i = w(e_i) \mid e_i \in E \}$
\end{FSSR}

\begin{remark}
\label {remIndex}
Note that under this definition of $F$-$SSR$ the function $w$ of an instance $(E,w)$ does not play any role in deciding whether a $k$-tuple $(S_1,\ldots,S_k)$ is feasible or infeasible solution; in other words, the element weights do not affect feasibility, only their indices do. Consequently, for a specific problem $\mathcal{P} =(\mathcal{I},k,\mathcal{F}) \in F\mbox{-}SSR$ and two different instances $(E,w)$ and $(E,w')$ in $\mathcal{I}$ with the same ground elements $E$, the feasible solutions of the two instances are the same.
\end{remark}

We will now introduce a family that is similar to $F$-$SSR$ 
%$Semi\mbox{-}Restricted$ $F\mbox{-}SSR$ 
but there is a major difference which is an extra condition. In this family we know (we give it as input), the minimum between the maximum values of the solution. This is rather technical and it will become obvious in the following paragraphs.

\begin{SRFSSR}\label{srfssr}
For every problem $\mathcal{P}=(\mathcal{I},k, \mathcal{F})$ in $F$-$SSR$, we define an associated optimization problem $\mathcal{P}' = (\mathcal{I}', k', \mathcal{F}')$ as follows:
\begin{itemize}
    \item the set of instances of $\mathcal{P'}$ is  \\ \centerline{$ \mathcal{I}'=\{(E,w, m) \mid (E,w)\in \mathcal{I} \mbox{ and } m\in \{1,\ldots |E| \}  \} $}
    \item $k'=k$
    \item the collection  of feasible solutions of instance $(E,w, m) \in \mathcal{I}'$ is given by: $$\mathcal{F}'(k,E,w, m) = \{ (S_1,...,S_k) \in \mathcal{F}(k,E) \mid \min_{j \in \{1,\ldots, k\}}\{ \max_{i\in S_j }w(e_i)\} = w(e_m)\}  $$
\end{itemize}

\noindent
and the goal of $\mathcal{P}'$ is to find for an instance $(E,w,m)$ a feasible solution \hfill \\ $(S^*_1,$ $,\ldots,$ $S^*_k)$ such that
\begin{align*}
\mathcal{MR}(S^*_1,\ldots,S^*_k, A )= \min\{\mathcal{MR}(S_1,\ldots,S_k, A) \mid (S_1,\ldots,S_k) \in \mathcal{F}(k,E) \} 
\end{align*}
where $A=\{a_i = w(e_i) \mid e_i \in E \}$.
We define the family of problems $Semi$- $Restricted$ $F$-$SSR$ as the class of problems $\left\{\mathcal{P}' \mid \mathcal{P} \in F \mbox{-}SSR\right\}$.
\end{SRFSSR}

\begin{remark}
We note that if a problem belongs to $Semi\mbox{-}Restricted$ $F\mbox{-}SSR$ it cannot belong to $F\mbox{-}SSR$ because there is the extra condition for a solution $(S_1,...,S_k)$ to be feasible, $\min_{j \in \{1,\ldots k\}}\{ {\max_{i\in S_j }w(e_i)}\} = w(e_m)$  which depends on the weight function $w$ and not only on the set of elements $E$ as is the case for problems in $F\mbox{-}SSR$.
\end{remark}
 
\begin{remark}
It is obvious that if there exists a deterministic polynomial time Turing Machine that can decide if a solution is feasible for a problem $\mathcal{P} $ in F-SRR then we can construct another deterministic polynomial time Turing Machine that takes into account the extra condition to decide, if a solution is feasible for the semi restricted version  $\mathcal{P}' $ in Semi Restricted F-SRR of the previous. 
\end{remark}

We must note that $F\mbox{-}SSR$  contains many problems of many from different areas in computer science and could prove useful to get an FPTAS for them if we could develop a pseudo - polynomial algorithm with a particular property (it will be explained in the next section) for the semi restricted versions of them. This family includes subset sum ratio problems with matroid restrictions, graph restrictions, cardinality restrictions (including partition problems). Some more problems could be scheduling problems and knapsack.

To give some examples, we will present some problems that belong in $F$-$SSR$. For the first two, the proof that they actually do belong in $F$-$SSR$ will be presented in section~\ref{secTS} and  section~\ref{SecExamples} respectively. We must note that  the decision version of $Factor$-$r$ $SSR$ that follows was studied in \cite{cie:eid:pag:sch}. For these two problems, we will introduce FPTAS algorithms in the following sections. Moreover we will present other problems of $F\mbox{-}SSR$ that have more complicated constraints and could prove interesting to be studied in the future.

\begin{TSSSR}
Let $A=\{(a_1,b_1),$ $\dots,(a_n,b_n)\}$ be a set of pairs of real numbers. We are searching for two nonempty and disjoint sets $S_1,S_2 \subseteq \{1,...,n\}$ that minimize
\begin{align*}
 &\frac{\max\{ \sum_{i \in S_1} a_i, \sum_{j \in S_2} b_j\}}{\min\{ \sum_{i \in S_1} a_i, \sum_{j \in S_2} b_j\}} \ .
\end{align*}
\end{TSSSR}

\begin{facSSR}
Given a set $A = \{a_1,\ldots,a_n\}$ of $n$ positive integers and a real number $r\geq1$, find two 
nonempty and disjoint sets $ S_1$, $S_2$ $\subseteq \{1,\ldots,n\}$ that minimize the ratio 
\begin{align*}
 &\frac{\max\{r \cdot \sum_{i \in S_1} a_i, \sum_{j \in S_2} a_j\}}{\min\{r \cdot \sum_{i \in S_1} a_i, \sum_{j \in S_2} a_j\}} \ .
\end{align*}
\end{facSSR}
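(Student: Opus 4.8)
The goal is to certify that $Factor$-$r$ $SSR$ is a member of $F$-$SSR$, that is, to realize an arbitrary instance $(A,r)$ as a triple $(\mathcal{I},k,\mathcal{F})$ equipped with a weight function, so that the induced $\mathcal{MR}$-minimization reproduces the stated ratio. The single genuine constraint, recorded in Remark~\ref{remIndex}, is that feasibility must be decidable from the indices alone; the multiplier $r$ can therefore not appear in $\mathcal{F}$ and must instead be absorbed into the weights.

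The first step is to double the ground set. Given $A=\{a_1,\dots,a_n\}$ and $r\ge 1$, I would take $E=\{e_1,\dots,e_{2n}\}$ with $w(e_i)=a_i$ for $1\le i\le n$ and $w(e_{n+i})=r\cdot a_i$ for $1\le i\le n$, and set $k=2$. The unscaled copies $e_1,\dots,e_n$ are intended for the ``$S_2$-side'' and the scaled copies $e_{n+1},\dots,e_{2n}$ for the ``$S_1$-side''. Writing $A'=\{w(e)\mid e\in E\}$, the collection $\mathcal{I}$ consists of all such doubled pairs $(E,w)$.

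Next I would define $\mathcal{F}(2,E)$ to be the set of pairs $(T_1,T_2)$ of nonempty subsets with $T_1\subseteq\{n+1,\dots,2n\}$ and $T_2\subseteq\{1,\dots,n\}$ for which no index $i\in\{1,\dots,n\}$ satisfies both $i\in T_2$ and $n+i\in T_1$ (on ground sets not of this doubled shape, $\mathcal{F}$ may be taken empty). Every clause refers to indices only, so feasibility is index-based and trivially checkable in polynomial time, as $F$-$SSR$ demands. The assignment $(S_1,S_2)\mapsto\bigl(\{\,n+i\mid i\in S_1\,\},\,S_2\bigr)$ is then a bijection between the feasible solutions of the original instance and $\mathcal{F}(2,E)$: the images are automatically disjoint since they lie in disjoint index ranges, the forbidden-pair condition is exactly the disjointness $S_1\cap S_2=\emptyset$, and nonemptiness is preserved in both directions.

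Finally I would check that this bijection preserves the objective value. Since $k=2$ gives $\mathcal{MR}(T_1,T_2,A')=\frac{\max\{\sum_{e\in T_1}w(e),\ \sum_{e\in T_2}w(e)\}}{\min\{\sum_{e\in T_1}w(e),\ \sum_{e\in T_2}w(e)\}}$, and by construction $\sum_{e\in T_1}w(e)=r\sum_{i\in S_1}a_i$ while $\sum_{e\in T_2}w(e)=\sum_{j\in S_2}a_j$, the $\mathcal{MR}$-value coincides with the Factor-$r$ ratio for every matched pair, so the two minima agree. The part demanding the most care is exactly this separation of concerns: in Factor-$r$ SSR a single index contributes either $r\,a_i$ or $a_i$ according to the side on which it is placed, whereas in $F$-$SSR$ each element carries one fixed weight; the asymmetry is recovered only by duplicating every element and burying $r$ in the weight of its scaled copy, while the forbidden-pair condition prevents a single original index from being counted on both sides at once.
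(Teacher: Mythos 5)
Your proposal is correct and follows essentially the same route as the paper: in Section~\ref{SecExamples} the authors likewise double the ground set to $\{a_1,\ldots,a_n,r\cdot a_1,\ldots,r\cdot a_n\}$, restrict the two solution sets to the two index ranges, forbid selecting both copies of the same original index, and let $\mathcal{MR}$ with these weights reproduce the Factor-$r$ ratio. The only (immaterial) difference is which index range you assign to the $r$-scaled copies, which does not matter since the $\max/\min$ ratio is symmetric in the two sums.
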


In \cite{go:mo:tl} there were introduced digraph constraints for the subset sum problem which can easily be modeled via our framework. Generally we are able demand as constraints of $S_1$ and $S_2$ to be a specific property considering the vertices of the graph, for example we may demand that the solution consists of independent sets or dominant sets etc. Not only can we impose constraints for the sets of vertices but we can impose constraints on the edges of the graph as well. Finally we may impose restrictions that concern both edges and vertices at the same time, for example take into account vertices that form a complete graph. 

The same way we define the constraints from a graph we are able to demand that the solution of a problem  consists of independent sets of a given matroid. 

\begin{comment}
\begin{align*}
    N_G^-(v) = \{ u \in v : (u,v) \in E \} &&
    \left(N_G^+(v) = \{ u \in v : (v,u) \in E \} \right)
\end{align*}
Now we define the following constraints: 
\begin{align}
    \mbox{if } v \in S \Rightarrow  N_G^+(v) \subseteq S \label{constr1} \\
    \mbox{if } v \in S \Rightarrow  N_G^-(v) \subseteq S \label{constr2}
\end{align}

\begin{definition}
Given a directed graph $G(V,E)$ and a weight function $w : V \rightarrow \mathbb{R}^+$, we want to find two non-empty and not equal sets $S_1,S_2 \subseteq V$ such that $S_1,S_2$ respect the constraint \ref{constr1} (resp. \ref{constr2} or a combination of the two) and
\begin{align*}
 &\frac{\max\{\sum_{v \in S_1} w(v), \sum_{u \in S_2} w(u)\}}{\min\{\sum_{v \in S_1} w(v), \sum_{u \in S_2} w(u)\}} \ .
\end{align*}
\end{definition}
\end{comment}

\begin{SSRmatroids}
Given a matroid $\mathcal{M}(E,I)$ and a weight function $w : E \rightarrow \mathbb{R}^+$. We want to find two non empty and non equal sets $S_1,S_2 \in  I $ such that:
\begin{align*}
 &\frac{\max\{\sum_{x \in S_1} w(x), \sum_{y \in S_2} w(y)\}}{\min\{\sum_{x \in S_1} w(x), \sum_{y \in S_2} w(y)\}} \ .
\end{align*}
\end{SSRmatroids}

\noindent
Before we continue to the next section we will present two lemmas which give us information about the  solutions which are feasible for both problems in $F\mbox{-}SSR$ and $Semi$-$Restricted$ $F\mbox{-}SSR$. Moreover we must note that
all the proofs for the theorems and the lemmas can be found in the appendix.
\begin{lemma}
\label{lemma1}
Let $\mathcal{P}=(\mathcal{I},k,\mathcal{F}) $ a problem in $F\mbox{-}SSR$ and  $\mathcal{P'}=(\mathcal{I}',k',\mathcal{F}') $ the semi restricted version of $\mathcal{P}$. If $(E,w) \in \mathcal{I}$ and $(E,w',m) \in \mathcal{I}'$ are the instances of $\mathcal{P}$ and $\mathcal{P}'$ respectively then 
%\begin{align*}
    any feasible solution $(S_1,\ldots, S_k)$  of the instance $(E, w',m)$ of $\mathcal{P}'$
    is also a feasible solution of the instance $(E,w)$  of $\mathcal{P}$.
%\end{align*}
\end{lemma}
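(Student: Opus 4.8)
The plan is to prove the inclusion essentially by unwinding the two notions of feasibility and then invoking the observation recorded in Remark~\ref{remIndex}. The only point that requires care is that the two instances $(E,w)$ and $(E,w',m)$ need not share the same weight function; they are guaranteed only to share the same ground set $E$. So I cannot argue by directly comparing weights, and must instead rely on the fact that in $F$-$SSR$ feasibility is weight-independent.

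First I would fix an arbitrary feasible solution $(S_1,\ldots,S_k)$ of the instance $(E,w',m)$ of $\mathcal{P}'$. By the definition of the family $Semi$-$Restricted$ $F$-$SSR$, being feasible for $(E,w',m)$ means precisely that $(S_1,\ldots,S_k)\in\mathcal{F}'(k,E,w',m)$. The displayed formula defining $\mathcal{F}'$ exhibits this collection as a subset of $\mathcal{F}(k,E)$ cut out by the additional condition $\min_{j}\max_{i\in S_j}w'(e_i)=w'(e_m)$; since that extra condition only removes tuples, we have $\mathcal{F}'(k,E,w',m)\subseteq\mathcal{F}(k,E)$, and hence $(S_1,\ldots,S_k)\in\mathcal{F}(k,E)$.

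Next I would translate membership in $\mathcal{F}(k,E)$ into feasibility for the instance $(E,w)$ of $\mathcal{P}$. This is where the subtlety enters: by the definition of $F$-$SSR$ the feasible solutions of $(E,w)$ are exactly the members of $\mathcal{F}(k,E)$, and by Remark~\ref{remIndex} this collection is determined by $k$ and the ground elements $E$ alone, independently of the particular weight assignment. In other words, $\mathcal{F}(k,E)$ is the feasible set of every instance of $\mathcal{P}$ with ground set $E$, whether its weights are given by $w$ or by $w'$. Therefore $(S_1,\ldots,S_k)\in\mathcal{F}(k,E)$ is feasible for $(E,w)$, which is exactly the claim.

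The argument is thus a short chain of set inclusions, and I expect no genuine difficulty in the calculations. The main obstacle, such as it is, is purely conceptual: one must keep the roles of $w$ and $w'$ straight and recognize that Remark~\ref{remIndex} is precisely what licenses passing from the $w'$-instance to the $w$-instance without affecting feasibility.
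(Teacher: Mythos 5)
Your proof is correct and follows essentially the same route as the paper: both reduce the claim to the set inclusion $\mathcal{F}'(k,E,w',m)\subseteq\mathcal{F}(k,E)$, which is immediate from the definition of the semi-restricted family. Your additional appeal to Remark~\ref{remIndex} to justify that $\mathcal{F}(k,E)$ is also the feasible set of the instance $(E,w)$ merely makes explicit a step the paper leaves implicit.
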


\begin{lemma} \label{lemma2}
Let $\mathcal{P}=(\mathcal{I},k,\mathcal{F}) $ a problem in $F\mbox{-}SSR$ and  $\mathcal{P'}=(\mathcal{I}',k',\mathcal{F}') $ the semi restricted version of $\mathcal{P}$. If $E$ is a set of elements and $w$, $w'$ two weight functions such that:
$$\mbox{For any } i,j \in \{1,\ldots n\}, \ \  w(e_i)< w(e_j) \Leftrightarrow w'(e_i)\ \leq w'(e_j) $$
then any feasible solution $(S_1,\ldots, S_k)$ for the instance $(E,w)$  of $ \mathcal{P}$ is a feasible solution for the instance $(E,w',m)$  of $ \mathcal{P}'$ if 
$$w(e_m) = \min_{j \in \{1,\ldots,k\}}\{ \max\{ w(e_i) \mid i \in S_j\}\}$$
\end{lemma}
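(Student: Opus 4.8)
The plan is to verify the two requirements for membership in $\mathcal{F}'(k,E,w',m)$ separately. The first requirement, namely $(S_1,\ldots,S_k)\in\mathcal{F}(k,E)$, is immediate: by Remark~\ref{remIndex} the collection $\mathcal{F}(k,E)$ depends only on the ground set $E$ (on the indices), not on the weight function, so a tuple that is feasible for the instance $(E,w)$ lies in exactly the same $\mathcal{F}(k,E)$ out of which $\mathcal{F}'(k,E,w',m)$ is carved. Hence the whole content of the lemma reduces to establishing the min-max identity $\min_{j}\{\max_{i\in S_j} w'(e_i)\}=w'(e_m)$, given that the analogous identity holds for $w$.

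First I would draw the two structural consequences of the hypothesis $w(e_i)<w(e_j)\Leftrightarrow w'(e_i)\leq w'(e_j)$ (for $i\neq j$). Applying the forward implication in both orientations shows that $w$ and $w'$ are weakly order-preserving in the same direction: $w(e_i)<w(e_j)$ forces $w'(e_i)\leq w'(e_j)$, and $w(e_i)>w(e_j)$ forces $w'(e_i)\geq w'(e_j)$. More importantly, I would observe that the biconditional rules out ties in $w$: if $w(e_i)=w(e_j)$ with $i\neq j$ then neither $w(e_i)<w(e_j)$ nor $w(e_j)<w(e_i)$ holds, so by the backward implication neither $w'(e_i)\leq w'(e_j)$ nor $w'(e_j)\leq w'(e_i)$ holds, which is impossible; therefore $w$ is injective. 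This injectivity is exactly what will let the position of the maximum be transported between the two weightings.

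The core argument then runs as follows. Let $S_{j^*}$ be a set realizing the outer minimum for $w$, so that $\max_{i\in S_{j^*}} w(e_i)=w(e_m)$ and, in particular, $m\in S_{j^*}$ with $w(e_m)\geq w(e_i)$ for every $i\in S_{j^*}$. For any $i\in S_{j^*}$ we have either $w(e_i)<w(e_m)$, whence $w'(e_i)\leq w'(e_m)$ by order preservation, or $w(e_i)=w(e_m)$, whence $i=m$ by injectivity; in both cases $w'(e_i)\leq w'(e_m)$, so $\max_{i\in S_{j^*}} w'(e_i)=w'(e_m)$. It then remains to show $w'(e_m)\leq \max_{i\in S_j} w'(e_i)$ for every other set. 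Fixing $j\neq j^*$ and letting $e_p$ be a $w$-heaviest element of $S_j$, we have $w(e_p)=\max_{i\in S_j} w(e_i)\geq w(e_m)$, and $p\neq m$ because the sets are disjoint and $m\in S_{j^*}$; injectivity then upgrades this to $w(e_p)>w(e_m)$, and order preservation yields $w'(e_p)\geq w'(e_m)$, so $\max_{i\in S_j} w'(e_i)\geq w'(e_m)$. Combining, $w'(e_m)$ is attained by $S_{j^*}$ and is a lower bound over all sets, giving $\min_j\{\max_{i\in S_j} w'(e_i)\}=w'(e_m)$, as required.

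The step I expect to be the main obstacle — or at least the one demanding the most care — is the correct exploitation of the biconditional around equalities. The hypothesis is deliberately asymmetric (strict $<$ on the $w$ side, non-strict $\leq$ on the $w'$ side), and the argument hinges on first extracting injectivity of $w$ from it and then using disjointness of the $S_j$ to guarantee that $e_m$ sits in a single set; without these two observations the within-set maximizer could fail to transfer, and a tie in $w$ could make $\max_{i\in S_{j^*}} w'(e_i)$ strictly exceed $w'(e_m)$. Everything else is routine bookkeeping with the order relation.
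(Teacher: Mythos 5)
Your proof is correct, and it follows the same two-part decomposition as the paper's: membership in $\mathcal{F}(k,E)$ is weight-independent (Remark~\ref{remIndex}), so everything reduces to transporting the min-max identity from $w$ to $w'$. The difference lies in how that second part is handled. The paper's own proof only derives the single inequality $w'(e_m) \leq \max\{ w'(e_i) \mid i \in S_j \mbox{ and } j \in \{1,\ldots,k\}\}$ from the order hypothesis and then asserts the required equality $w'(e_m) = \min_{j}\{\max_{i\in S_j} w'(e_i)\}$; it never argues that $w'(e_m)$ is actually attained as the maximum of its own block, nor that it lower-bounds the maxima of the other blocks. You supply exactly the missing ingredients: the biconditional forces $w$ to be injective, injectivity together with disjointness of the $S_j$ pins $e_m$ to a unique block and upgrades the comparison with the other blocks' maxima from $\geq$ to $>$, and order preservation then transfers both the upper and the lower bound. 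So your argument is not a different route so much as a completed version of the paper's sketch, and it is the more rigorous of the two. The only caveat is the one you already flag: the hypothesis must be read for $i \neq j$ (for $i = j$ it is literally false), which is evidently the intended reading.
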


\end{section}

\begin{section}{A Framework Yielding FPTAS for Problems in $F$-$SSR$}
\label{SecExFPTAS}

%Before presenting our main result we shall present a theorem that it is crucial to prove our main theorem.
In the following theorem we want to define a scale parameter $\delta $ which we will use later to change the size of our input and the pseudo - polynomial algorithms  will run in polynomial time. In addition by using these parameters, we will define the properties that the sets of the output should satisfy to be $(1+\varepsilon)$ approximation. These parameters are not unique but any other number should to the trick as long as all the properties of the theorem bellow are satisfied. 

\begin{theorem}
\label{theor10}
Let $A = \{a_1,...,a_n\}$ be a set of positive real numbers, $\varepsilon \in (0,1)$, two sets $S_{1Opt}, S_{2Opt} \subseteq \{1,...,n\}$ and any numbers $w,m,\delta$ that satisfy: 
\begin{itemize}
    \item $0< w \leq \min{\left(\sum_{i \in S_{1Opt}}{a_i},\sum_{i \in S_{2Opt}}{a_i}\right)}$ 
    \item $n \geq m \geq \max{\left(|S_{1Opt}|,|S_{2Opt}|\right)},$
    \item $\delta = (\varepsilon \cdot w)/(3 \cdot m)$
\end{itemize}
If $S_1, S_2 \subseteq \{1,...,n\}$ are two non-empty sets such that:
\begin{itemize}
    \item $w \leq \min{\left(\sum_{i \in S_{1}}{a_i},\sum_{i \in S_{2}}{a_i}\right)}$ \hfill
    \item $n \geq m \geq \max{\left(|S_{1}|,|S_{2}|\right)},$
    \item  $1 \leq \mathcal{MR}(S_1,S_2,A') \leq  \mathcal{MR}(S_{1Opt},S_{2Opt},A')$ where $A' = \{ \lfloor \frac{a_1}{\delta} \rfloor,...,\lfloor\frac{a_n}{\delta}\rfloor\}$
\end{itemize}
Then the following inequality holds
\begin{align*}
    1 \leq \mathcal{MR}(S_1,S_2,A) \leq  (1+\varepsilon) \cdot \mathcal{MR}(S_{1Opt},S_{2Opt},A) \ .
\end{align*}
\end{theorem}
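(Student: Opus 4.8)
The plan is to exploit the elementary rounding relation between the true weights $a_i$ and the scaled-down weights $a_i' = \lfloor a_i/\delta\rfloor$, and then to combine the two sources of rounding error carefully so that the overall loss is exactly a factor $(1+\varepsilon)$. Write $\sigma(S)=\sum_{i\in S}a_i$ for the true sum of a set. From $a_i'=\lfloor a_i/\delta\rfloor$ one has $a_i-\delta\le \delta a_i'\le a_i$; summing over a set $S$ with $|S|\le m$ and recalling that $m\delta=\varepsilon w/3$ gives the two-sided estimate $\sigma(S)-\varepsilon w/3\le \delta\sum_{i\in S}a_i'\le \sigma(S)$. Since multiplying every rounded sum by the common factor $\delta$ leaves every ratio unchanged, I may compute $\mathcal{MR}(\cdot,\cdot,A')$ on the scaled sums $\delta\sum_{i\in S}a_i'$. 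Denote these scaled sums $y_1,y_2$ for $S_1,S_2$ and $Y_1,Y_2$ for $S_{1Opt},S_{2Opt}$, and the true sums $x_1,x_2$ and $X_1,X_2$. The hypotheses then read $x_i-\varepsilon w/3\le y_i\le x_i$ and $X_i-\varepsilon w/3\le Y_i\le X_i$, together with $w\le\min(x_1,x_2)$ and $w\le\min(X_1,X_2)$.

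The lower bound $\mathcal{MR}(S_1,S_2,A)\ge 1$ is immediate, since the objective is a ratio of a maximum to a minimum of positive sums. For the upper bound, write $x_{\max},x_{\min}$ for the maximum and minimum of $x_1,x_2$, and similarly for the other pairs (note the index attaining the maximum need not be consistent between the true and scaled sums, so I work with $\max/\min$ directly rather than fixing an ordering). The key asymmetric step is to bound the candidate ratio \emph{additively}: from $x_{\max}\le y_{\max}+\varepsilon w/3$ and $x_{\min}\ge y_{\min}$, and using $y_{\min}\ge x_{\min}-\varepsilon w/3\ge w(1-\varepsilon/3)$, I obtain
\[
\frac{x_{\max}}{x_{\min}}\;\le\;\frac{y_{\max}}{y_{\min}}+\frac{\varepsilon w/3}{w(1-\varepsilon/3)}\;=\;\frac{y_{\max}}{y_{\min}}+\frac{\varepsilon}{3-\varepsilon}.
\]
For the optimum pair I instead keep a \emph{multiplicative} bound: from $Y_{\max}\le X_{\max}$ and $Y_{\min}\ge X_{\min}(1-\varepsilon/3)$ (again using $w\le X_{\min}$), I get $Y_{\max}/Y_{\min}\le \tfrac{3}{3-\varepsilon}\,\mathcal{MR}(S_{1Opt},S_{2Opt},A)$.

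Finally I chain these through the hypothesis $\mathcal{MR}(S_1,S_2,A')\le\mathcal{MR}(S_{1Opt},S_{2Opt},A')$, i.e. $y_{\max}/y_{\min}\le Y_{\max}/Y_{\min}$. Setting $P=\mathcal{MR}(S_{1Opt},S_{2Opt},A)\ge 1$, this yields $x_{\max}/x_{\min}\le \tfrac{3}{3-\varepsilon}P+\tfrac{\varepsilon}{3-\varepsilon}$, and since $P\ge 1$ the additive term is at most $\tfrac{\varepsilon}{3-\varepsilon}P$, so the right-hand side is at most $\tfrac{3+\varepsilon}{3-\varepsilon}P$. It then remains only to verify the numerical inequality $\tfrac{3+\varepsilon}{3-\varepsilon}\le 1+\varepsilon$ for $\varepsilon\in(0,1)$, which rearranges to $\varepsilon(1-\varepsilon)\ge 0$ and holds throughout the range.

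I expect the main obstacle to be precisely this constant-wrangling. The naive route of converting \emph{both} the candidate error and the optimum error into multiplicative factors of $(1-\varepsilon/3)^{-1}$ produces an overall factor $(1-\varepsilon/3)^{-2}$, which exceeds $1+\varepsilon$ as $\varepsilon\to 1$ and thus fails. The fix is to treat the candidate's rounding loss additively and to absorb the leftover $\tfrac{\varepsilon}{3-\varepsilon}$ using $P\ge 1$; getting that split right is the crux, while everything else is a routine application of the floor inequality and the lower bounds $w\le\min$ of the relevant sums.
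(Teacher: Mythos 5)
Your proof is correct and follows essentially the same route as the paper's: the same floor-rounding estimate $\sum_{i\in S}a_i - m\delta \le \delta\sum_{i\in S}a_i' \le \sum_{i\in S}a_i$, an additive bound on the candidate pair's rounding loss, a multiplicative factor $\tfrac{3}{3-\varepsilon}$ on the optimum pair, and a final chaining through the hypothesis on the $A'$-ratios using $\mathcal{MR}(S_{1Opt},S_{2Opt},A)\ge 1$ to absorb the additive term. The only differences are cosmetic (your constants $\tfrac{\varepsilon}{3-\varepsilon}$ and $\tfrac{3+\varepsilon}{3-\varepsilon}$ versus the paper's $\tfrac{\varepsilon}{3}$ and $1+\tfrac{\varepsilon}{2}$, and your more explicit handling of the fact that the maximizing index may differ between $A$ and $A'$, which the paper dispatches with a ``without loss of generality'').
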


\begin{comment}
\begin{remark}
In the above theorem we denote the sets by $S_{1Opt},S_{2Opt}$ because if we use this theorem for a problem that belongs in $F$-$SSR$, we would like to approximate the sets that achieve the optimum ratio. 
\end{remark}
\end{comment}

The next theorem presents the conditions that should be met to construct an FPTAS algorithm for a problem that belongs in $F$-$SSR$. 
Keep in mind that this framework should be considered similar to linear programming, i.e. if there is a way to prove that a problem belongs to $F\mbox{-}SSR$ and at the same time there is a pseudo - polynomial time algorithm for its semi-restricted version then we can obtain an FPTAS algorithm.

\begin{comment}
In the next theorem we will set same conditions under which we can construct a FPTAS algorithm from a pseudo - polynomial algorithm. Before we write the theorem we will define an extra condition that we will add to our problems in order to make it possible to construct the FPTAS. This condition "C" will be the follow:
\begin{align*}
    \mbox{C:} n_0 = \argmin{ a_k | a_k = \max_{i \in S_k}{a_i} }
\end{align*}
\end{comment}

\begin{theorem}
\label{theor20}
Let $\mathcal{P} = (\mathcal{I},\mathcal{F},\mathcal{M},\mathcal{G}) $ be a problem in $F$-$SSR$ and   $\mathcal{P}'  = (\mathcal{I}',\mathcal{F}',$ $\mathcal{M},$ $\mathcal{G})$ its corresponding problem in $Semi \ Restricted \ F\mbox{-}SSR$.
If for problem $\mathcal{P}'$ there exists an algorithm that solves exactly all instances $A = \{a_1,\ldots a_n,m\} \in \mathcal{I}'$ in which all $a_i$ values are integers in time $\mathcal{O}(poly(n,a_m))$, then $\mathcal{P}$ admits an FPTAS.
\end{theorem}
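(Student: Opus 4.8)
The plan is to convert the assumed pseudo-polynomial algorithm for $\mathcal{P}'$ into an FPTAS for $\mathcal{P}$ by the classical scaling technique, where the scaling granularity is tied to a single ``pivot'' element that we guess exhaustively, and where the approximation guarantee is delegated wholesale to Theorem~\ref{theor10}. Fix an instance $(E,w)\in\mathcal{I}$ with weights $a_i=w(e_i)$, and let $(S_{1Opt},S_{2Opt})$ be an unknown optimal solution. The pivot is the element realising $\min_{j}\max_{i\in S_{jOpt}}a_i=a_{m}$; both optimal subsets then have sum at least $a_m$ (the one attaining the inner minimum contains $e_m$, and the other has its own maximum $\ge a_m$), so $a_m$ is a lower bound on the minimum of the two optimal sums. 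This is exactly the quantity $w$ that Theorem~\ref{theor10} wants, and guessing its index is precisely what the semi-restricted family was designed to encode.

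First I would, for each candidate index $m\in\{1,\dots,n\}$, set the cardinality parameter of Theorem~\ref{theor10} to $n$ (distinct from the guessed index, despite the clashing notation), put $w=a_m$ and $\delta=\varepsilon w/(3n)$, and form the integer instance $(E,w',m)$ of $\mathcal{P}'$ with $w'(e_i)=\lfloor a_i/\delta\rfloor$. I would run the assumed exact algorithm for $\mathcal{P}'$ on it (skipping indices for which it is infeasible), and finally output the returned pair of smallest $\mathcal{MR}(\cdot,\cdot,A)$ over all $n$ guesses. The point that makes this efficient is that the pivot's rounded weight is $w'(e_m)=\lfloor a_m/\delta\rfloor=\lfloor 3n/\varepsilon\rfloor$, so by hypothesis each call runs in time $\mathcal{O}(poly(n,3n/\varepsilon))$ \emph{regardless} of how large the other rounded weights become; multiplying by the $n$ guesses keeps the total running time polynomial in $n$ and $1/\varepsilon$.

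For correctness I would focus on the single correct guess $m$. Since $\lfloor\cdot/\delta\rfloor$ is monotone it commutes with $\max$ and $\min$, giving $\min_j\max_{i\in S_{jOpt}}w'(e_i)=\lfloor a_m/\delta\rfloor=w'(e_m)$; as $F$-$SSR$ feasibility depends only on indices (Remark~\ref{remIndex}), this shows via Lemma~\ref{lemma2} that $(S_{1Opt},S_{2Opt})$ is feasible for the rounded instance $(E,w',m)$. Hence the exact output $(S_1,S_2)$ satisfies $\mathcal{MR}(S_1,S_2,A')\le\mathcal{MR}(S_{1Opt},S_{2Opt},A')$, with $A'=\{w'(e_i)\}$, and both rounded sums are positive (each subset contains an element of rounded weight $\ge 1$), so also $\mathcal{MR}(S_1,S_2,A')\ge 1$. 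By Lemma~\ref{lemma1}, $(S_1,S_2)$ is feasible for the original instance of $\mathcal{P}$, so it is a legitimate output. The cardinality and rounded-ratio hypotheses of Theorem~\ref{theor10} are now met, and the theorem yields $\mathcal{MR}(S_1,S_2,A)\le(1+\varepsilon)\,\mathcal{MR}(S_{1Opt},S_{2Opt},A)$; since the algorithm returns the best pair over all guesses, its output is at least this good.

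The step I expect to be the main obstacle is the remaining hypothesis of Theorem~\ref{theor10}, namely $w\le\min(\sum_{i\in S_1}a_i,\sum_{i\in S_2}a_i)$ for the \emph{returned} pair. Feasibility of $(S_1,S_2)$ for $(E,w',m)$ only forces each subset to contain an element of rounded weight $\ge\lfloor a_m/\delta\rfloor$, hence of true weight strictly greater than $a_m-\delta$, not necessarily $\ge a_m$. I would close this $\delta$-gap by choosing $w$ and $\delta$ self-consistently, e.g.\ $w=a_m-\delta$ with $\delta=\varepsilon w/(3n)$, which solves to $\delta=\Theta(\varepsilon a_m/n)$ and still leaves the pivot's rounded weight $\Theta(n/\varepsilon)$, so that this single $w$ lower-bounds both the optimal sums (which exceed $a_m$) and the returned sums (which exceed $a_m-\delta$). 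A secondary delicate point is the precise order-consistency hypothesis of Lemma~\ref{lemma2}: since $\lfloor\cdot/\delta\rfloor$ collapses ties, one must verify that condition holds for the rounded weights (or fix a tie-breaking rule) so that the optimal solution genuinely transfers to the rounded semi-restricted instance.
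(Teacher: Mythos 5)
Your proposal is correct and follows essentially the same route as the paper: guess the pivot index $m$, scale by $\delta=\varepsilon a_m/(3n)$, solve the rounded semi-restricted instance exactly, return the best of the $n$ candidates, and delegate the $(1+\varepsilon)$ guarantee to Theorem~\ref{theor10} while bounding the running time via $a'_m\le 3n/\varepsilon$ — this is precisely Algorithm~\ref{Alg1} and its analysis. The $\delta$-gap you flag in verifying $w\le\min\bigl(\sum_{i\in S_1}a_i,\sum_{i\in S_2}a_i\bigr)$ for the returned pair is a genuine detail that the paper's proof passes over silently, and your self-consistent choice of $w$ is a reasonable way to close it.
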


\begin{comment}
\begin{remark}
Let us call $M = \left\{ \max_{i \in S_j}{(a_i)} \mid j \in \{1,...,k\} \right\}$ the set that contains the maximum elements of the sets $S_1,\ldots,S_k$.
Then constraint $C_m$ of the above theorem requires that the $m$-th element of $A$ be the minimum of $M$.
\end{remark}
\end{comment}

Now we will present an algorithm that approximates $\mathcal{P}$ using the algorithm for $\mathcal{P}'$. We will denote the algorithm that returns the exact solution for $\mathcal{P}'$ by $\mathcal{SOL}_{ex,\mathcal{P}'}(A)$.

\begin{algorithm}[H]
\caption{FPTAS for the problem $\mathcal{P}$ [$\mathcal{SOL}_{apx,\mathcal{P}}(A)$ function]}
\label{Alg1}
\begin{algorithmic}[1]
  \Require A set  $A= \{a_1,\ldots,a_n\}, a_i\in \mathbb{R}^+$.
  \Ensure Sets with max ratio $(1+\varepsilon)$ to the optimal max ratio for the problem $\mathcal{P}$.
  \State $(S_1^*,\ldots,S_k^*) \leftarrow \{\emptyset,\ldots,\emptyset\} $
  \For{$m\leftarrow 1$ \textbf{to} $n$}
  \State $\delta \leftarrow \frac{ \varepsilon \cdot a_m}{3 \cdot n}$
  \State $A^{(m)}\leftarrow \emptyset$
  \For{$i\leftarrow 1$ \textbf{to} $n$}
  \State $a'_i \leftarrow \lfloor \frac{a_i}{\delta}\rfloor$
  \State $A^{(m)}\leftarrow A^{(m)} \cup \{a'_i\}$
  \EndFor
  \State ${A}^{(m)} \leftarrow  {A}^{(m)}\cup  \{m\}$
  \State $(S'_1,\ldots, S'_k) \leftarrow \mathcal{SOL}_{ex,\mathcal{P}'}(A^{(m)})$
  \If{$\mathcal{MR}( S'_1,\ldots, S'_k,A) \leq \mathcal{MR}(S_1^*,\ldots, S_k^*,A) $} 
    \State $(S_1^*,\ldots,S_k^*) \leftarrow (S'_1,\ldots, S'_k)$
  \EndIf
  \EndFor
  \State \Return $(S_1^*,\ldots,S_k^*)$
\end{algorithmic}
\end{algorithm}  
\noindent
In the next sections we will give some examples of how this framework works by using Theorem~\ref{theor20} to find an FPTAS algorithm for some problems.

\end{section}

\begin{section}{2-Set SSR}
\label{secTS} Here, we will design an FPTAS algorithm for $2 \mbox{-} Set\ SSR$. We must note that faster approximation algorithms could be developed for this particular problem but this is not the scope of this section.  

We will begin by proving that this problem belongs in $F\mbox{-}SSR$. We will match the $2 \mbox{-}Set \ SSR$ with a problem $(\mathcal{I},\mathcal{F},\mathcal{M},\mathcal{G})$ in $F\mbox{-}SSR$. If we let  the set of instances $\mathcal{I}$ contain sets of positive numbers $A=\{a_1,\ldots,a_{2 \cdot n}\}= \{a_1,\ldots,a_n,$ $b_1,\ldots,b_n\}$, and the set of feasible solutions $\mathcal{F}$ contain all the pairs of sets $(S_1,S_2)$ such that $ S_1 \subseteq \{1,...,n\} \mbox{, } S_2 \subseteq \{n+1,...,2\cdot n\} $, $ \nexists \ (i,j) \mbox{ such that } i \in S_1, j\in S_2 \mbox{ with } i \equiv  j \ (\bmod\ n) $, 
the objective $\mathcal{M}=\mathcal{MR}(S_1, S_2,A))$ and the goal function $\mathcal{G}=\min$,
then the 2-Set SSR problem coincides with $(\mathcal{I,F,M,G})$ which is a problem in $F\mbox{-}SSR$.

Now we shall present a pseudo - polynomial algorithm that finds an optimal solution for $Semi$-$Restricted$ $2\mbox{-}Set$ $SSR$. Our algorithm employs two separate algorithms for two different cases.  %To make not only the algorithm that will follow but also the proof of correctness more easy to read, the following algorithm will transform the input into the form of definition~\ref{defTS}.

\begin{algorithm}[H]
\caption{$Semi$-$Restricted$ $2 \mbox{-}Set \ SSR$ solution [$\mathcal{SOL}(A,m)$ function]}
\label{AlgTSm}
\begin{algorithmic}[1]
    \Require a set $A=\{a_1,\ldots,a_{2\cdot n}\}$, $a_i\in \mathbb{Z}^+$, and an integer $m$, $1\leq m\leq 2 \cdot n$.
    \Ensure the sets of an optimal solution for $Semi$-$Restricted$ $2 \mbox{-}Set \ SSR$.
    \State $S'_1\leftarrow \emptyset$, $S'_2\leftarrow \emptyset$, $S_{min}\leftarrow \emptyset$, $S_{max}\leftarrow \emptyset$
    \If{$m\leq n$}
        \State $(p,p')\leftarrow(0,n)$
    \ElsIf{$n< m \leq 2\cdot n$}
        \State $(p,p')\leftarrow(n,0)$
    \EndIf
    \State $S_{min}\leftarrow \{i \mid i \in \{1,\ldots ,n\} \mbox{ and }  a_{i+p} \leq a_m \}\smallsetminus\{m - p \}$
    \State $S_{max}\leftarrow \{i \mid i \in \{1,\ldots ,n\} \mbox{ and } a_{i+p'} \geq  a_m \} \smallsetminus\{m -p +p'\} $
    \If{$S_{max} \neq \emptyset$}
        \State $(S_1,S_2)\leftarrow \mathcal{SOL}_{Case1}(A,m,S_{min},S_{max}) $
        \State $(S'_1,S'_2)\leftarrow \mathcal{SOL}_{Case2}(A,m,S_{min},S_{max}) $
        \If{$\mathcal{MR}(S_1,S_2,A) < \mathcal{MR}(S'_1,S'_2,A) $}
            \State $(S'_1,S'_2)\leftarrow (S_1,S_2)$
        \EndIf
    \EndIf
    \State \Return $S'_1$, $S'_2$
\end{algorithmic}
\end{algorithm}  

\noindent
We will continue with the presentation of algorithms $\mathcal{SOL}_{Case1}(A,m,S_{min},$ $S_{max}) $ and $\mathcal{SOL}_{Case2}(A,m,S_{min},S_{max}) $. Let us first define a function that will simplify their presentation.

\begin{definition}[$\mathsf{LTST}$: Larger Total Sum Tuple selection]
Given two tuples $\vec{v_1}=(S_{1},S_{2},x)$ and $\vec{v_2}=(S_{1}',S_{2}',x')$ 
we define the function $\mathsf{LTST}(\vec{v_1},\vec{v_2})$ as follows:
\begin{align*}
 \mathsf{LTST}(\vec{v_1},\vec{v_2})= 
 \begin{cases}
  \vec{v_2} &\mbox{ if } \vec{v_1} = (\emptyset,\emptyset,0) \mbox{ or } x'>x \mbox{,}\\
  \vec{v_1} &\mbox{ otherwise}\enspace .
 \end{cases}
\end{align*}
\end{definition}
\noindent
We will use this function to compare the sum of the sets $S_1 \cup S_2$ and $S_1' \cup S_2'$ i.e. 
\begin{align*}
    x = \sum_{i\in S_1 \cup S_2}{a_i} && \mbox{ and } && x' = \sum_{i\in S_1' \cup S_2'}{a_i}
\end{align*}

In the next algorithm we study the case 1. In case 1 we consider that we need to use an element that its weight is greater than the sum of the elements' weights that could belong to the other set. In this case the set with the largest total weight contains only one element and the other set contains all the allowed elements (elements that have no conflicts).
\begin{algorithm}[H]
\caption{Case 1 solution [$\mathcal{SOL}_{ Case  1}(A,m,S_{min},S_{max})$ function]}
\label{sub2}
\begin{algorithmic}[1]
    \Require a set $A=\{a_1,\ldots,a_{2\cdot n}\}$, $a_i\in \mathbb{Z}^+$ and an integer $m$, $1\leq m\leq 2\cdot n$ and $S_{min},S_{max} \subseteq \{1,...,n\} $.
    \Ensure Case 1 optimal solution for $Semi$-$Restricted$ $2 \mbox{-}Set \ SSR$.
    \State $S'_1\leftarrow \emptyset$, $S'_2\leftarrow \emptyset$
    \If{$m\leq n$}
        \State $p \leftarrow 0$, $p' \leftarrow n$
    \Else
        \State $p \leftarrow n$, $p'\leftarrow 0$
    \EndIf
    \State $Q \leftarrow a_m + \sum_{i \in S_{min} } a_{i+p}$
    \ForAll{$i\in S_{max}$ \textbf{and} $a_{i+p'}>Q$}
        \State $a\leftarrow 0$
        \If{$i \in S_{min}$}
            \State $a\leftarrow a_{i+p}$
        \EndIf
        \If{$ a_{i+p'}/(Q-a) < \mathcal{MR}(S'_1,S'_2,A)$}
            \State $S\leftarrow \{j+p \mid j \in S_{min} \mbox{ or } j=m-p \} \smallsetminus \{i+p \}$
            \State $(S'_1,S'_2)\leftarrow (S, \{i+p'\})$
        \EndIf
    \EndFor
    \State \Return $S'_1$, $S'_2$
\end{algorithmic}
\end{algorithm}  

In case 2 we consider that the largest (weighted) element doesn't necessarily dominate the sum of the weights of the second set. In this case we create a three dimensional matrix whose first dimension represents the elements we have already used, the second represents the difference of the sets' sums and the third is rather technical and it used to be sure that we won't overwrite tuples that have wanted properties. In the cells we store the two sets of indices and the total sum of their weights. Moreover when the third dimension has the value 1 then this means that these sets could be a part of a feasible  solution.
\begin{algorithm}[H]
\caption{Case 2 solution [$\mathcal{SOL}_{Case 2}(A,m,S_{min},S_{max})$ function]}
\label{sub3}
\begin{algorithmic}[1]
  \Require a set $A=\{a_1,\ldots,a_{2\cdot n}\}$, $a_i\in \mathbb{Z}^+$ and an integer $m$, $1\leq m\leq 2\cdot n$ and $S_{min},S_{max} \subseteq \{1,...,n\} $.
  \Ensure Case 2 optimal solution for $Semi$-$Restricted$ $2 \mbox{-}Set \ SSR$.
        \State $S'_1\leftarrow  \emptyset$, $S'_2\leftarrow \emptyset$
        \If{$m\leq n$} 
            \State$p \leftarrow 0$, $p' \leftarrow n$
        \Else 
            \State $ p \leftarrow n$, $p' \leftarrow 0$ 
        \EndIf
        \State $Q\leftarrow a_m + \sum_{i \in S_{min}} a_{i+p} $
        \State $T[i,d,l] \leftarrow \{ \emptyset, \emptyset, 0 \}, \ \forall \ (i,d,l)\in \{0,\ldots, n\}\times \{-2\cdot Q,\ldots , Q\}\times \{0,1\} $
        \State $T[0,a_m,0]\leftarrow (\{m\},\emptyset,a_m)$
        \For{$i\leftarrow1$ \textbf{to} $n$}
            \ForAll{$(d,l) \in \{-2\cdot Q,\ldots , Q\}\times \{0,1\}$
    	        \State $(S_1,S_2,x)\leftarrow T[i-1,d,l] $}
    	        \State $T[i,d,l]\leftarrow \mathsf{LTST}(T[i,d,l], T[i-1,d,l])$ \label{line12}
                \State $d' \leftarrow d+a_{i+p}$
                \If{$i \in S_{min}$}
        	        \State $T[i,d',l]\leftarrow \mathsf{LTST}(T[i,d',l], (S_1\cup\{i+p\},S_2,x+a_{i+p}))$\label{line13}
                \EndIf
                \State $d' \leftarrow d-a_{i+p'}$
                \If{$i \in S_{max}$ \textbf{and} $d' \geq -2 \cdot Q$}
                    \State $T[i,d',1]\leftarrow \mathsf{LTST}(T[i,d',1], (S_1,S_2\cup\{i+p'\},x+a_{i+p'}))$ \label{line14}
                \ElsIf{$i\notin S_{max}$ \textbf{and} $d'\geq -2 \cdot Q$}
                    \State $T[i,d',l]\leftarrow \mathsf{LTST}(T[i,d',l], (S_1,S_2\cup\{i+p'\},x+a_{i+p'}))$ 
                \EndIf
            \EndFor
        \EndFor
        \For{$d \leftarrow -2 \cdot Q$ \textbf{to} $ Q$ }
      \State $(S_1,S_2,x)\leftarrow T[n,d,1] $
      \If{$\mathcal{MR}(S_1,S_2,A)< \mathcal{MR}(S'_1,S'_2,A)$}
	   \State $S'_1\leftarrow S_1$, $S'_2\leftarrow S_2$
      \EndIf
    \EndFor
    \State \Return $(S'_1,S'_2)$
\end{algorithmic}
\end{algorithm}  

%The proof that Algorithm~\ref{AlgTSm} returns an optimal Solution for the semi restricted version of $2 \mbox{-}Set \ SSR$ can be found in the appendix.

\begin{theorem}
Algorithm~\ref{AlgTSm} runs in time $\mathcal{O}(n^2 \cdot a_m)$.
\end{theorem}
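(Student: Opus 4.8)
The plan is to bound the running time of Algorithm~\ref{AlgTSm} by accounting separately for its preprocessing and for the two subroutine calls it makes, and to show that the dominant cost is the dynamic program of Case~2. First I would observe that everything in $\mathcal{SOL}(A,m)$ outside the two subroutine calls is cheap: choosing $(p,p')$ is $\mathcal{O}(1)$, constructing $S_{min}$ and $S_{max}$ each scans the indices $\{1,\ldots,n\}$ once and costs $\mathcal{O}(n)$, and the concluding comparison of the two candidate solutions via $\mathcal{MR}$ costs $\mathcal{O}(n)$. Hence the overall bound reduces to bounding $\mathcal{SOL}_{Case1}$ and $\mathcal{SOL}_{Case2}$.

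The key quantity controlling both subroutines is $Q = a_m + \sum_{i \in S_{min}} a_{i+p}$. Since $S_{min}$ contains only indices $i$ with $a_{i+p} \leq a_m$ and $|S_{min}| \leq n$, I would establish $Q \leq a_m + n \cdot a_m = \mathcal{O}(n \cdot a_m)$; this single inequality is what makes the whole pseudo-polynomial estimate go through. Plugging it in, Case~1 performs at most $|S_{max}| \leq n$ iterations, each building a set and evaluating one ratio in $\mathcal{O}(n)$ time, for a total of $\mathcal{O}(n^2)$, which is dominated by Case~2 since $a_i \in \mathbb{Z}^+$ gives $a_m \geq 1$.

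For Case~2 the table $T$ is indexed by $\{0,\ldots,n\} \times \{-2Q,\ldots,Q\} \times \{0,1\}$, so it has $\mathcal{O}(n \cdot Q) = \mathcal{O}(n^2 \cdot a_m)$ cells, and its initialization costs the same. The main double loop runs $n$ times over the outer index and $\mathcal{O}(Q)$ times over the pairs $(d,l)$, i.e. $\mathcal{O}(n \cdot Q) = \mathcal{O}(n^2 \cdot a_m)$ iterations, each performing a constant number of $\mathsf{LTST}$ operations; the closing extraction loop runs $\mathcal{O}(Q)$ times reading off one candidate per value of $d$. Summing these with the $\mathcal{O}(n)$ preprocessing and the $\mathcal{O}(n^2)$ of Case~1 yields the claimed $\mathcal{O}(n^2 \cdot a_m)$ bound.

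The step I expect to be the main obstacle is justifying that each $\mathsf{LTST}$ operation, and each cell update, really costs $\mathcal{O}(1)$ rather than $\mathcal{O}(n)$: as written, every cell of $T$ stores the two index sets $S_1,S_2$, so literally copying them would add a factor of $n$ and break the bound. I would resolve this by storing in each cell only the total sum $x$ together with a back-pointer to the predecessor cell and the added element, so that $\mathsf{LTST}$ compares two integers and redirects one pointer in $\mathcal{O}(1)$; the actual sets are reconstructed once, by following back-pointers, only for the finally chosen cell, at $\mathcal{O}(n)$ cost that does not affect the asymptotics. The same representation handles the membership tests, and each ratio evaluation becomes $\mathcal{O}(1)$ as well, since the stored sum $x$ and the difference index $d$ give $\sum_{i \in S_1} a_i = (x+d)/2$ and $\sum_{i \in S_2} a_i = (x-d)/2$.
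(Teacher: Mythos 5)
Your proposal is correct and follows essentially the same route as the paper's proof: preprocessing and Case 1 are cheap, Case 2 dominates via a table of size $n \times 3Q \times 2$ with $\mathcal{O}(1)$ work per cell, and $Q \leq n \cdot a_m$ because every summand in $Q$ is at most $a_m$. Your back-pointer representation is simply an explicit instantiation of the ``suitable data structure'' the paper invokes to get $\mathcal{O}(1)$ per cell, so no further changes are needed.
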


Since Algorithm~\ref{AlgTSm} is a pseudo - polynomial time algorithm for the $Semi$-$Restricted $ $2 \mbox{-} Set \ SSR$ which solves the instances with integer values and runs in time $\mathcal{O}(poly(n,a_m))$, by using Theorem~\ref{theor20} we get that $2 \mbox{-}Set \ SSR$ admits an $FPTAS$. Furthermore, by using Algorithm~\ref{Alg1} we have the following:

\begin{theorem}
\label{theor30}
For $2 \mbox{-}Set \ SSR$ and for every $\varepsilon \in (0,1)$ we can find an $(1+\varepsilon)$ approximation solution in time $\mathcal{O}(n^4/\varepsilon)$.
\end{theorem}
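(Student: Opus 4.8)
The plan is to assemble three facts that are already in place and then do a single, clean running-time computation. First, at the start of this section we matched $2\text{-}Set\ SSR$ with a problem $(\mathcal{I},\mathcal{F},\mathcal{M},\mathcal{G})\in F\text{-}SSR$, so $2\text{-}Set\ SSR$ belongs to $F\text{-}SSR$. Second, Algorithm~\ref{AlgTSm} (together with its subroutines in Algorithm~\ref{sub2} and Algorithm~\ref{sub3}) solves $Semi\text{-}Restricted\ 2\text{-}Set\ SSR$ exactly, and the preceding theorem gives its running time as $\mathcal{O}(n^2\cdot a_m)$, which is of the form $\mathcal{O}(\mathrm{poly}(n,a_m))$. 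Third, Theorem~\ref{theor20} converts exactly this situation into an FPTAS. Hence the \emph{existence} of a $(1+\varepsilon)$-approximation and its correctness follow immediately by instantiating the generic scheme of Algorithm~\ref{Alg1} with $\mathcal{SOL}_{ex,\mathcal{P}'}$ taken to be Algorithm~\ref{AlgTSm}. All that genuinely remains for Theorem~\ref{theor30} is to verify that this particular instantiation runs in $\mathcal{O}(n^4/\varepsilon)$.

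For the running time I would analyze Algorithm~\ref{Alg1} directly. Its outer loop runs over $m$ (here the ground set has $2n$ elements, so $2n$ iterations, which only affects constants). In each iteration, scaling the input to $A^{(m)}$ costs $\mathcal{O}(n)$, and the update of the running best via $\mathcal{MR}$ costs $\mathcal{O}(n)$; both are dominated by the call to Algorithm~\ref{AlgTSm} on $A^{(m)}$, which costs $\mathcal{O}(n^2\cdot a'_m)$ where $a'_m=\lfloor a_m/\delta\rfloor$ is the scaled weight of the index-$m$ element. The crucial observation is that $\delta=\varepsilon a_m/(3n)$ is chosen proportionally to $a_m$, so that $a'_m=\lfloor a_m/\delta\rfloor=\lfloor 3n/\varepsilon\rfloor=\mathcal{O}(n/\varepsilon)$ \emph{independently of the magnitudes of the inputs}. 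Thus each call costs $\mathcal{O}(n^2\cdot n/\varepsilon)=\mathcal{O}(n^3/\varepsilon)$, and summing over the $\mathcal{O}(n)$ iterations of the outer loop gives the claimed $\mathcal{O}(n^4/\varepsilon)$.

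If I wanted to make the section self-contained rather than leaning entirely on Theorem~\ref{theor20}, I would re-derive correctness as follows. Let $(S_{1Opt},S_{2Opt})$ be an optimal feasible pair and let $m^*$ attain $\min_{j}\max_{i\in S_j} a_i$. Then $a_{m^*}\le\min\!\big(\sum_{i\in S_{1Opt}}a_i,\sum_{i\in S_{2Opt}}a_i\big)$ and $n\ge\max(|S_{1Opt}|,|S_{2Opt}|)$, so with $w=a_{m^*}$, $m=n$, $\delta=\varepsilon a_{m^*}/(3n)$ the hypotheses of Theorem~\ref{theor10} are met. Since floor-rounding preserves order weakly, Lemma~\ref{lemma2} guarantees that $(S_{1Opt},S_{2Opt})$ is feasible for the semi-restricted rounded instance $A^{(m^*)}$, so in iteration $m=m^*$ the exact solver returns a pair $(S_1',S_2')$ with $1\le\mathcal{MR}(S_1',S_2',A')\le\mathcal{MR}(S_{1Opt},S_{2Opt},A')$; Lemma~\ref{lemma1} makes $(S_1',S_2')$ feasible for the original instance, and Theorem~\ref{theor10} then yields $\mathcal{MR}(S_1',S_2',A)\le(1+\varepsilon)\,\mathcal{MR}(S_{1Opt},S_{2Opt},A)$. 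As Algorithm~\ref{Alg1} keeps the best ratio over all $m$, its output is a $(1+\varepsilon)$-approximation. The main obstacle, in fact, is not the arithmetic above but the correctness transfer it rests on: one must be sure that the exact solution computed on the \emph{rounded} weights remains feasible and near-optimal for the \emph{original} weights for the right choice of $m$; this is precisely the content of Theorem~\ref{theor10} and Lemmas~\ref{lemma1}–\ref{lemma2}, so once those are granted the present theorem reduces to the magnitude-independence of $a'_m$ and the loop accounting above.
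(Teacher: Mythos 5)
Your proposal is correct and follows essentially the same route as the paper: establish membership of $2\mbox{-}Set\ SSR$ in $F\mbox{-}SSR$, invoke the $\mathcal{O}(n^2\cdot a_m)$ bound for Algorithm~\ref{AlgTSm} together with Theorem~\ref{theor20} for correctness, and then observe that the scaling forces $a'_m=\lfloor 3n/\varepsilon\rfloor$ so that the $n$ iterations of Algorithm~\ref{Alg1} cost $\mathcal{O}(n^3/\varepsilon)$ each, for a total of $\mathcal{O}(n^4/\varepsilon)$. Your optional self-contained correctness re-derivation simply restates the content of the proof of Theorem~\ref{theor20} and is consistent with it.
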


\end{section}

\begin{section}{Approximation of $SSR$ and $Factor\mbox{-}r \ SSR$}
\label{SecExamples}
In this section we will use the algorithm we design for the $2 \mbox{-}Set \ SSR$ in order to approximate the original problem $SSR$ and another one variation of $SSR$, the $Factor\mbox{-}r \ SSR$. 

Before we approximate these problems we will prove that both of them are in $F\mbox{-}SSR$. Starting with the $SSR$, it is easy to identify it with a problem $(\mathcal{I},\mathcal{F},\mathcal{M},\mathcal{G})$ in $F\mathcal{-}SSR$: we let the set of instances $\mathcal{I}$ contain sets of positive integers $A=\{a_1,\ldots,a_n\}$, the set of feasible solutions $\mathcal{F}$ contain all the pairs of sets $(S_1, S_2)$ such that 
$ S_1\cup S_2 \subseteq\{1,\ldots,n\}, \ S_1 \cap S_2 = \emptyset$,
the measure be $\mathcal{M}=\mathcal{MR}(S_1, S_2,A)$, and the goal function be $\mathcal{G}=\min$. 

Regarding $Factor\mbox{-}r \ SSR$, we identify it with a problem $(\mathcal{I},\mathcal{F},\mathcal{M},\mathcal{G})$ in $F\mathcal{-}SSR$, by letting the set of instances $\mathcal{I}$ contain sets of positive numbers $A=\{a_1,\ldots,a_{2n}\}$ \ $ = \{a_1,\ldots a_n, r \cdot a_1,\ldots ,r \cdot a_ n\}$ with $a_i \in \mathbb{Z}^+$ for $i \in \{1,\ldots,n\}$, $r \in \mathbb{R}$, the set of feasible solutions $\mathcal{F}$ contain all pairs of sets $(S_1, S_2)$ such that $ S_1 \subseteq \{1,...,n\} \mbox{ and } S_2 \subseteq \{n+1,...,2 n\} \nonumber \mbox{ and } \forall \ (i,j), i \in S_1 \wedge j\in S_2 \Rightarrow i+n \neq  j$, 
the measure be $\mathcal{M}=\mathcal{MR}(S_1, S_2,A))$, and the goal function be $\mathcal{G}=\min$.
%then we can see that $Factor\mbox{-}r \ SSR \in F\mbox{-}SSR$.

For both problems we can modify their input in order to match the input of $2 \mbox{-}Set \ SSR$. Specifically, is not hard to see that an optimal solution for $SSR$ with input $A=\{a_1,\ldots ,a_n\}$ is an optimal solution for $2 \mbox{-}Set \ SSR$ with input $A=\{\{a_1,a_1)\ldots ,(a_n,a_n)\}$ and vice versa. The same applies to an optimal solution for $Factor\mbox{-}r \ SSR$ with input $(\{a_1,\ldots,a_n\},r)$ and an optimal solution of $2 \mbox{-}Set \ SSR$ with input $A=\{(a_1,r \cdot a_1)\ldots ,(a_n,r \cdot a_n)\}$. Furthermore the feasible solutions for $2 \mbox{-}Set \ SSR$, with the specific input we discussed above, are the same with the ones for $SSR$ (respectively for $Factor\mbox{-}r \ SSR$). So if we find an $(1+\varepsilon)$ approximating solution for the $2 \mbox{-}Set \ SSR$ problem with input $A=\{(a_1,a_1)\ldots ,(a_n,a_n)\}$ (resp. with input $A=\{(a_1,r \cdot a_1)\ldots ,(a_n,r \cdot a_n)\}$) then this is an $(1+\varepsilon)$ approximating solution for $SSR$ (resp. for $Factor\mbox{-}r \ SSR$).

\end{section}

%\begin{section}{Conclusions}
%In this paper we presented a framework that produces FPTAS algorithms to a large variety of problems that contain weights, an optimization faction and some restrictions. This framework does not always produce the fastest FPTAS algorithm because firstly the run time of the algorithm depends on the pseudo polynomial time algorithm that is being used and secondly our aim was to include as many problems as possible.

%For future work it would be interesting to study the restrictions that guarantee the existence of a pseudo polynomial time algorithm for the problems with those kind of restrictions. 

%\end{section}

%
% ---- Bibliography ----
%
%\bibliographystyle{splncs}
%\bibliography{diploma}

\newpage

\begin{section}{APPENDIX}

\textbf{Lemma \ref{lemma1}.} \textit{Let $\mathcal{P}=(\mathcal{I},k,\mathcal{F}) $ a problem in $F\mbox{-}SSR$ and  $\mathcal{P'}=(\mathcal{I}',k',\mathcal{F}') $ the semi restricted version of $\mathcal{P}$. If $(E,w) \in \mathcal{I}$ and $(E,w',m) \in \mathcal{I}'$ are the instances of $\mathcal{P}$ and $\mathcal{P}'$ respectively then 
%\begin{align*}
    any feasible solution $(S_1,\ldots, S_k)$  of the instance $(E, w',m)$ of $\mathcal{P}'$
    is also a feasible solution of the instance $(E,w)$  of $\mathcal{P}$.}
%\end{align*}
\begin{proof}
The feasible solutions $(S_1,\ldots, S_k)$ of instance $(E,w',m)$ of $\mathcal{P}'$ is the $\mathcal{F}'(k,E,w',m)$. By the definition of $\mathcal{P}'$ in $Semi$-$Restricted$ $F\mbox{-}SSR$ we have $\mathcal{F}'(k,E,w',m) \subseteq \mathcal{F}(k,E)$ thus the lemma holds.
\end{proof}

\noindent
\textbf{Lemma \ref{lemma2}.} \textit{ 
Let $\mathcal{P}=(\mathcal{I},k,\mathcal{F}) $ a problem in $F\mbox{-}SSR$ and  $\mathcal{P'}=(\mathcal{I}',k',\mathcal{F}') $ the semi restricted version of $\mathcal{P}$. If $E$ is a set of elements and $w$, $w'$ two weight functions such that:
$$\mbox{For any } i,j \in \{1,\ldots n\}, \ \  w(e_i)< w(e_j) \Leftrightarrow w'(e_i)\ \leq w'(e_j) $$
then any feasible solution $(S_1,\ldots, S_k)$ for the instance $(E,w)$  of $ \mathcal{P}$ is a feasible solution for the instance $(E,w',m)$  of $ \mathcal{P}'$ if 
$$w(e_m) = \min_{j \in \{1,\ldots,k\}}\{ \max\{ w(e_i) \mid i \in S_j\}\}$$}

\begin{proof}
Let $(S_1,\ldots, S_k)$ be a feasible solution for the problem $ \mathcal{P}$ with instance $(E,w)$. This mean that $(S_1,\ldots, S_k) \in \mathcal{F}(k,E)$. Assuming that: 
$$w(e_m) = \min_{j \in \{1,\ldots,k\}}\{ \max\{ w(e_i) \mid i \in S_j\}\}$$ 
is easy to see that $w(e_m) \leq \max\{ w(e_i) \mid i \in S_j\mbox{ and } j \in \{1,\ldots k\}\}$ which by the assumptions in the lemma gives us $w'(e_m) \leq \max\{ w'(e_i) \mid i \in S_j\mbox{ and } j \in \{1,\ldots k\}\}$. This means that $(S_1,\ldots, S_k)$ is a feasible solution for $ \mathcal{P}'$ with instance $(k,E,w',m)$ it meets both conditions $$(S_1,\ldots, S_k) \in \mathcal{F}(k,E)$$ and  
$$ w'(e_m)= \min_{j \in \{1,\ldots,k\}}\big\{ \max \{ w'(e_i) \mid i \in S_j \} \big\}. $$
\end{proof}

\noindent
The following three lemmas will be used to prove theorem~\ref{theor10}. We will start with the following lemma that relates $A$ with $A'$.

\begin{lemma}
\label{lemma1T10}
Let $a_i,a_i'$ and $\delta$ as they are in Theorem~\ref{theor10}, then for any $S \in \{S_{1Opt},$ $S_{2Opt},S_{1},S_{2} \}$ they satisfy the following:

\begin{align}
    \sum_{i \in S}a_i-m\cdot \delta &\leq \sum_{i \in S}a'_i\cdot \delta \leq \sum_{i \in S}a_i \label{ineqai} \\
     m \cdot \delta &\leq
    \frac{\varepsilon}{3} 
    \sum_{i \in S}{a_i}\label{ineqmd}
\end{align}
\end{lemma}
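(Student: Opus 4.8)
The plan is to prove the two displayed bounds separately, since each reduces to a single elementary estimate once the appropriate hypothesis of Theorem~\ref{theor10} is invoked. Throughout I use that $a'_i = \lfloor a_i/\delta \rfloor$ and that $\delta > 0$.

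For \eqref{ineqai} I would start from the defining property of the floor, $x - 1 < \lfloor x \rfloor \leq x$, applied to $x = a_i/\delta$. Multiplying through by $\delta > 0$ gives the per-element bound $a_i - \delta < a'_i \cdot \delta \leq a_i$, valid for every index $i$. Summing over $i \in S$ yields the upper bound $\sum_{i \in S} a'_i \cdot \delta \leq \sum_{i \in S} a_i$ immediately, and on the left it yields $\sum_{i \in S} a_i - |S| \cdot \delta < \sum_{i \in S} a'_i \cdot \delta$. To reach the stated lower bound it then suffices to replace $|S|$ by $m$: the second bullet in each group of hypotheses guarantees $|S| \leq m$ for every $S \in \{S_{1Opt}, S_{2Opt}, S_1, S_2\}$, so $m \cdot \delta \geq |S| \cdot \delta$ and hence $\sum_{i \in S} a_i - m \cdot \delta \leq \sum_{i \in S} a_i - |S| \cdot \delta < \sum_{i \in S} a'_i \cdot \delta$.

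For \eqref{ineqmd} I would simply substitute the defining value $\delta = (\varepsilon \cdot w)/(3 \cdot m)$, which collapses the left-hand side to $m \cdot \delta = (\varepsilon \cdot w)/3$. It then remains to bound $w$ from above by $\sum_{i \in S} a_i$, which is exactly the content of the first bullet in each group of hypotheses: these assert $w \leq \min(\sum_{i \in S_{1Opt}} a_i, \sum_{i \in S_{2Opt}} a_i)$ and $w \leq \min(\sum_{i \in S_1} a_i, \sum_{i \in S_2} a_i)$, so that $w \leq \sum_{i \in S} a_i$ holds for each of the four sets. Combining, $m \cdot \delta = (\varepsilon/3) \cdot w \leq (\varepsilon/3) \sum_{i \in S} a_i$, as required.

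There is no real obstacle here, as both inequalities are routine. The only point that needs a moment's care is the uniformity over all four candidate sets, namely verifying that the cardinality bound $|S| \leq m$ and the sum lower bound $w \leq \sum_{i \in S} a_i$ apply equally to the optimal sets $S_{1Opt}, S_{2Opt}$ and to the approximate sets $S_1, S_2$; but this is precisely why Theorem~\ref{theor10} states these two conditions separately for each pair.
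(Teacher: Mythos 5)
Your proof is correct and follows essentially the same route as the paper's: the floor bound $a_i - \delta \leq a'_i \cdot \delta \leq a_i$ summed over $S$ together with $|S| \leq m$ for \eqref{ineqai}, and the substitution $m \cdot \delta = \varepsilon \cdot w / 3$ combined with $w \leq \sum_{i \in S} a_i$ for \eqref{ineqmd}. If anything yours is slightly cleaner, since the paper's appendix misstates the hypothesis used for the second inequality as $m \leq \sum_{i \in S} a_i$ where it clearly means $w \leq \sum_{i \in S} a_i$, which is the bound you correctly invoke from the first bullet of Theorem~\ref{theor10}.
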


\begin{proof}
To prove Eq.~\ref{ineqai}, notice that for all $i \in \{1,\ldots,n\}$ we define  $a'_i=\lfloor \frac{a_i}{\delta} \rfloor$. This gives us
$$
\frac{a_i}{\delta}-1 \leq a'_i \leq \frac{a_i}{\delta} \Rightarrow a_i -\delta \leq \delta \cdot a'_i \leq a_i \ . $$
In addition for any $S \in \{S_{1Opt},S_{2Opt},S_{1},S_{2} \}$ we have $|S| \leq m $, which means that
$$\sum_{i \in S}a_i-m\cdot \delta\leq \sum_{i \in S}a_i-|S| \cdot \delta \leq \sum_{i \in S}a'_i\cdot \delta \leq \sum_{i \in S}a_i \ .$$
As for Eq.~\ref{ineqmd} we have to take into account the theorem's assumptions. Specifically we know that
$$ m \leq \sum_{i \in S} a_i \mbox{ for all } S \in \{S_{1Opt},S_{2Opt},S_{1},S_{2}\} $$
which gives 
$$
    m \cdot \delta   =    \frac{\varepsilon \cdot w }{3}    \leq    \frac{\varepsilon}{3}     \sum_{i \in S}{a_i}
$$
\end{proof}

\begin{lemma}
\label{lemma2T10}
For sets $S_1$ and $S_2$ it holds
\begin{align*}
    \mathcal{MR}(S_1,S_2,A) \leq \mathcal{MR}(S_1,S_2,A') + \frac{\varepsilon}{3}
\end{align*}
\end{lemma}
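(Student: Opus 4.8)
The plan is to work entirely with the four sums $\Sigma_j := \sum_{i\in S_j}a_i$ and $\Sigma'_j := \sum_{i\in S_j}a'_i$ for $j\in\{1,2\}$, and to exploit that scaling all weights by $\delta$ leaves a ratio unchanged, so that $\mathcal{MR}(S_1,S_2,A')$ equals $\max(\delta\Sigma'_1,\delta\Sigma'_2)/\min(\delta\Sigma'_1,\delta\Sigma'_2)$; this lets me compare the scaled quantities $\delta\Sigma'_j$ directly against $\Sigma_j$ via Lemma~\ref{lemma1T10}, whose Eq.~\eqref{ineqai} gives $\Sigma_j-m\delta\le\delta\Sigma'_j\le\Sigma_j$. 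Since the statement is symmetric in $S_1,S_2$, I would assume without loss of generality that $\Sigma_1\ge\Sigma_2$, so that $\mathcal{MR}(S_1,S_2,A)=\Sigma_1/\Sigma_2$ (note $\Sigma_2\ge w>m\delta$ because $\varepsilon<1$, so all denominators involved are strictly positive and no degeneracy arises, since also $\delta\Sigma'_2\ge w(1-\varepsilon/3)>0$).

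Next I would split into two cases according to whether floor-rounding preserves the order of the two sums. If $\Sigma'_1\ge\Sigma'_2$ (order preserved), then $\mathcal{MR}(S_1,S_2,A')=\Sigma'_1/\Sigma'_2=(\delta\Sigma'_1)/(\delta\Sigma'_2)$, and from $\delta\Sigma'_1\ge\Sigma_1-m\delta$ together with $\delta\Sigma'_2\le\Sigma_2$ one gets $(\delta\Sigma'_1)/(\delta\Sigma'_2)\ge(\Sigma_1-m\delta)/\Sigma_2$, whence $\Sigma_1/\Sigma_2-(\delta\Sigma'_1)/(\delta\Sigma'_2)\le m\delta/\Sigma_2\le\varepsilon/3$, the final inequality being exactly Eq.~\eqref{ineqmd} applied to $S_2$. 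If instead $\Sigma'_1<\Sigma'_2$ (order flipped), then $\mathcal{MR}(S_1,S_2,A')=\Sigma'_2/\Sigma'_1\ge 1$, and it suffices to show $\Sigma_1/\Sigma_2\le 1+\varepsilon/3$; here the chain $\Sigma_1-m\delta\le\delta\Sigma'_1<\delta\Sigma'_2\le\Sigma_2$ forces $\Sigma_1<\Sigma_2+m\delta$, so $\Sigma_1/\Sigma_2<1+m\delta/\Sigma_2\le1+\varepsilon/3$ by Eq.~\eqref{ineqmd} once more. Combining the two cases yields the claim.

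The subtle point — and the step I expect to be the main obstacle — is the choice of which denominator to retain in the order-preserving case. The naive estimate $\mathcal{MR}(S_1,S_2,A)\le\mathcal{MR}(S_1,S_2,A')+m\delta/\min(\delta\Sigma'_1,\delta\Sigma'_2)$, which upper-bounds the numerator gap by $m\delta$ while leaving $\delta\Sigma'_2$ in the denominator, only yields the weaker additive error $\varepsilon/(3-\varepsilon)$, because $\delta\Sigma'_2$ can be as small as $\Sigma_2-m\delta$. The fix is to apply the two one-sided bounds of Eq.~\eqref{ineqai} simultaneously — the lower bound on $\delta\Sigma'_1$ and the upper bound on $\delta\Sigma'_2$ — \emph{before} clearing denominators, so that after simplification it is the true sum $\Sigma_2$, not its rounded counterpart, that survives in the denominator; only then does Eq.~\eqref{ineqmd}, which bounds $m\delta$ by $(\varepsilon/3)\Sigma_2$, give precisely the factor $\varepsilon/3$. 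I would finally double-check that the same ``$\Sigma_2$-in-the-denominator'' mechanism is exactly what makes the order-flipped case go through, and that the positivity of $\delta\Sigma'_2$ noted above justifies every division.
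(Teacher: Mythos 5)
Your proof is correct and follows essentially the same route as the paper: both apply the lower bound of Eq.~\ref{ineqai} to the numerator and the upper bound to the denominator so that the true sum $\sum_{j\in S_2}a_j$ survives in the error term, which Eq.~\ref{ineqmd} then bounds by $\varepsilon/3$. The only cosmetic difference is that the paper avoids your case split on whether rounding flips the order of the two sums by bounding $\sum_{i\in S_1}a'_i/\sum_{j\in S_2}a'_j$ directly by $\mathcal{MR}(S_1,S_2,A')$, which holds whichever rounded sum is larger since $\mathcal{MR}$ is defined as the maximum of the two ratios.
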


\begin{proof} 
Without loss of generality we will assume that 
\begin{align*}
    \mathcal{MR}(S_{1},S_{2},A) = \mathcal{R}(S_{1},S_{2},A)
\end{align*}
\begin{align*}
    \mathcal{R}(S_1,S_2,A)= \frac{\sum_{i \in S_1}a_i}{\sum_{j \in S_2}a_j} &\leq  \frac{\sum_{i \in S_1}a'_i\cdot \delta + \delta \cdot m}{\sum_{j \in S_2}a_j} &\mbox{[by Eq.~\ref{ineqai}] }\\
    &\leq \frac{\sum_{i \in S_1}a'_i\cdot \delta}{\sum_{j \in S_2}a'_j} +\frac{ \delta \cdot m}{\sum_{j \in S_2}a_j} &\mbox{[by Eq.~\ref{ineqai}] }\\
    &\leq \mathcal{MR}(S_1,S_2,A') +\frac{\varepsilon}{3} &\mbox{[by Eq.~\ref{ineqmd}] }
\end{align*}
\end{proof}

\begin{lemma}
\label{lemma3T10}
For every $\varepsilon \in (0,1)$ we have that 
\begin{align*}
    \mathcal{MR}(S_{1Opt},S_{2Opt},A') \leq (1+\varepsilon/2) \cdot \mathcal{MR}(S_{1Opt},S_{2Opt},A)
\end{align*}
\end{lemma}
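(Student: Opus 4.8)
The plan is to work entirely with the four subset sums
$P=\sum_{i\in S_{1Opt}}a_i$, $Q=\sum_{j\in S_{2Opt}}a_j$ and their rounded
counterparts $P'=\sum_{i\in S_{1Opt}}a'_i$, $Q'=\sum_{j\in S_{2Opt}}a'_j$, and to
convert the \emph{additive} rounding loss recorded in Lemma~\ref{lemma1T10} into a
\emph{multiplicative} factor by exploiting the lower bound $w$ on the optimal sums
(which is exactly what distinguishes this statement from the purely additive
Lemma~\ref{lemma2T10}). Since $\mathcal{MR}(S_{1Opt},S_{2Opt},A')=\max(P',Q')/\min(P',Q')$, I would avoid fixing in advance which set is the numerator and instead bound the max and the min separately, so that a possible reordering of the sums under rounding causes no trouble.

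First I would feed the two-sided estimate of Eq.~\ref{ineqai}, namely
$\tfrac{1}{\delta}(\sum_{S}a_i-m\delta)\le \sum_{S}a'_i\le \tfrac{1}{\delta}\sum_{S}a_i$
for $S\in\{S_{1Opt},S_{2Opt}\}$, into the max and the min. Because the map
$x\mapsto (x-m\delta)/\delta$ is increasing, this yields
$\max(P',Q')\le \max(P,Q)/\delta$ and $\min(P',Q')\ge(\min(P,Q)-m\delta)/\delta$,
and hence, after the common factor $\delta$ cancels,
\begin{align*}
\mathcal{MR}(S_{1Opt},S_{2Opt},A')\ \le\ \frac{\max(P,Q)}{\min(P,Q)-m\delta}\,.
\end{align*}
Here the denominator is positive since Eq.~\ref{ineqmd} gives
$m\delta\le \tfrac{\varepsilon}{3}\min(P,Q)<\min(P,Q)$.

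Next I would reduce the desired bound to a clean scalar inequality: dividing the
displayed estimate by $\mathcal{MR}(S_{1Opt},S_{2Opt},A)=\max(P,Q)/\min(P,Q)$, the
claim $\mathcal{MR}(S_{1Opt},S_{2Opt},A')\le(1+\varepsilon/2)\,\mathcal{MR}(S_{1Opt},S_{2Opt},A)$
is equivalent to
$\min(P,Q)\le(1+\varepsilon/2)\bigl(\min(P,Q)-m\delta\bigr)$, i.e.\ to
$m\delta\le \tfrac{\varepsilon}{2+\varepsilon}\,\min(P,Q)$. Finally I would invoke
Eq.~\ref{ineqmd} applied to whichever of $S_{1Opt},S_{2Opt}$ realizes
$\min(P,Q)$, giving $m\delta\le \tfrac{\varepsilon}{3}\min(P,Q)$, and close the gap
with the observation that $\tfrac{\varepsilon}{3}\le\tfrac{\varepsilon}{2+\varepsilon}$
precisely because $2+\varepsilon\le 3$ for $\varepsilon\in(0,1)$.

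The only genuinely delicate point is the comparison
$\tfrac{\varepsilon}{3}$ versus $\tfrac{\varepsilon}{2+\varepsilon}$: it is what forces
the error budget to be split so that an additive slack of size $\varepsilon/3$ on
the denominator translates into the multiplicative factor $(1+\varepsilon/2)$, and it
is also where the hypothesis $\varepsilon<1$ is essential. Everything else is
bookkeeping on the max/min, and the max/min formulation is what lets me sidestep the
case analysis of which set is larger before and after rounding.
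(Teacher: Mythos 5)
Your proposal is correct and follows essentially the same route as the paper's proof: both use Eq.~\ref{ineqai} to bound the rounded ratio by $\max(P,Q)/(\min(P,Q)-m\delta)$ and then use Eq.~\ref{ineqmd} together with $\varepsilon<1$ to turn the additive slack $m\delta$ into the factor $(1+\varepsilon/2)$ (your comparison $\tfrac{\varepsilon}{3}\le\tfrac{\varepsilon}{2+\varepsilon}$ is just an algebraic rearrangement of the paper's $\tfrac{\varepsilon}{3-\varepsilon}\le\tfrac{\varepsilon}{2}$). The only cosmetic difference is that you bound the max and min separately where the paper fixes the ordering ``without loss of generality'' and appeals to $\mathcal{R}\le\mathcal{MR}$ at the end.
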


\begin{proof}
Without loss of generality we will assume that 
\begin{align*}
    \mathcal{MR}(S_{1Opt},S_{2Opt},A') = \mathcal{R}(S_{1Opt},S_{2Opt},A')
\end{align*}
From Eq.~\ref{ineqai} we have that
\begin{align*}
 \mathcal{MR}(S_{1Opt},S_{2Opt},A') &= \frac{\sum_{i \in S_{1Opt}}{a'_i}}{\sum_{i \in S_{2Opt}}{a'_i}}
    \leq 
    \frac{\sum_{i \in S_{1Opt}}{a_i}}{\sum_{i \in S_{2Opt}}{a_i}-m \cdot \delta}
    \\
    &=
    \frac{\sum_{i \in S_{1Opt}}{a_i}}{\sum_{i \in S_{2Opt}}{a_i}-m \cdot \delta}
    \cdot
     \frac{\sum_{i \in S_{2Opt}}{a_i}}{\sum_{i \in S_{2Opt}}{a_i}}
     \\
     &=
    \frac{\sum_{i \in S_{2Opt}}{a_i}}{\sum_{i \in S_{2Opt}}{a_i}-m \cdot \delta}
    \cdot  \frac{\sum_{i \in S_{1Opt}}{a_i}}{\sum_{i \in S_{2Opt}}{a_i}}
     \\
     & = \left(1+ \frac{m\cdot \delta}{\sum_{i \in S_{2Opt}}a_i - m\cdot \delta} \right) \cdot \mathcal{R}(S_{1Opt},S_{2Opt},A)\
\end{align*}
\noindent
by Eq.~\ref{ineqmd} it follows that 
\begin{align*}
 \mathcal{MR}(S_{1Opt},S_{2Opt},A') 
 &\leq  \left(1 + \frac{1}{ \frac{3}{\varepsilon }- 1}\right) \cdot \mathcal{R}(S_{1Opt},S_{2Opt},A)& \\
 &=
 \left(1 + \frac{\varepsilon}{ 3- \varepsilon}\right)\cdot \mathcal{R}(S_{1Opt},S_{2Opt},A)
 \\
 & \leq
 \left(1 + \frac{\varepsilon}{ 2}\right)\cdot \mathcal{R}(S_{1Opt},S_{2Opt},A)&  \mbox{[because }\varepsilon \in (0,1)\mbox{]}
 \\
 & \leq 
 \left(1 + \frac{\varepsilon}{ 2}\right)\cdot \mathcal{MR}(S_{1Opt},S_{2Opt},A).&
\end{align*}
This concludes the proof.

\end{proof}

\noindent
Now we are ready to prove Theorem~\ref{theor10}.
\begin{proof}(of Theorem 1)
The theorem follows from a sequence of inequalities:
\begin{align*}
 \mathcal{MR}(S_{1},S_{2},A) 
 &\leq \mathcal{MR}(S_{1},S_{2},A')+ \frac{\varepsilon}{3}& \mbox{[by Lemma }\ref{lemma2T10}\mbox{]} \\
 & \leq \mathcal{MR}(S_{1Opt},S_{2Opt},A')+ \frac{\varepsilon}{3}&\\
 & \leq (1+\frac{\varepsilon}{2})\cdot \mathcal{MR}(S_{1Opt},S_{2Opt},A) + \frac{\varepsilon}{3}& \mbox{[by Lemma }\ref{lemma3T10}\mbox{]}\\
 & \leq (1+\varepsilon)\cdot \mathcal{MR}(S_{1Opt},S_{2Opt},A).&
\end{align*}
\end{proof}

\noindent
\textbf{Theorem~\ref{theor20}.} \textit{Let $\mathcal{P} = (\mathcal{I},\mathcal{F},\mathcal{M},\mathcal{G}) $ be a problem in $F$-$SSR$ and   $\mathcal{P}'  = (\mathcal{I}',\mathcal{F}',$ $\mathcal{M},\mathcal{G})$ its corresponding problem in $Semi \mbox{-} Restricted \ F\mbox{-}SSR$.
If for problem $\mathcal{P}'$ there exists an algorithm that solves exactly all instances $A = \{a_1,\ldots a_n,m\} \in \mathcal{I}'$ in which all $a_i$ values are integers in time $\mathcal{O}(poly(n,a_m))$, then $\mathcal{P}$ admits an FPTAS.}

\begin{proof}
Let us first remind that for every  problem in $F\mbox{-}SSR$, two different instances $A$ and $A'$ with the same number of elements have exactly the same feasible solutions (see Remark~\ref{remIndex}).
We need to prove that the output of  Algorithm~\ref{Alg1} is a $(1+\varepsilon)$-approximation of the optimum solution of $\mathcal{P}$ with input $A= \{a_1,\ldots a_n\}$.
Let $\varepsilon \in (0,1)$, $S_1,...,S_k$ be the sets of the optimum solution of $\mathcal{P}$ and $S_1^{(m)},...,S_k^{(m)}$ the solution of $\mathcal{P}'$ with input $(A^{(m)},m) = (\{a_1,\ldots a_n\},m) \in \mathcal{I}'$. Here we have to remind that a feasible solution of $\mathcal{P}'$ is a feasible solution of $\mathcal{P}$ if the input sets $A$' and $A$ for the problems have the same size. So the optimal solution $S_1^{(m)},\ldots,S_k^{(m)}$ of the  $\mathcal{P}'$ with input $(A^{(m)},m) = (\{a_1,\ldots a_n\},m) \in \mathcal{I}'$ is a feasible solution of $\mathcal{P}$ with input $A= \{a_1,\ldots a_n\}$. We will also denote with $a_{n_0} \in A$ the minimum element among the maximum of the sets $S_1,\ldots,S_k$ of the optimal solution, i.e. 
\begin{align*}
     a_{n_0} = \min_{j \in \{1,\ldots, k\}}{\left(\max_{i \in S_j}{a_i} \right) }
\end{align*}
This means that for the output of the Algorithm~\ref{Alg1}, $S_1^*,\ldots,S_k^*$ we have  $$\mathcal{MR}(S_1^*,...,S_k^*,A) \leq \mathcal{MR}(S_1^{(n_0)},...,S_k^{(n_0)},A) $$ 
so it is sufficient to prove that
$$\mathcal{MR}(S_1^{(n_0)},...,S_k^{(n_0)},A) \leq (1+ \varepsilon)\cdot \mathcal{MR}(S_1,...,S_k,A)
$$ 
It is obvious that the optimal solution $S_1,...,S_k$ is a feasible solution for the problem $\mathcal{P}'$ with input $A^{(n_0)}$ so we have
\begin{align}
\label{eqProofT2}
    \mathcal{MR}(S_1^{(n_0)},...,S_k^{(n_0)},A^{(n_0)}) \leq \mathcal{MR}(S_1,...,S_k,A^{(n_0)})
\end{align}
Without loss of generality let $\mathcal{MR}(S_1^{(n_0)},S_2^{(n_0)},A)=\mathcal{MR}(S_1^{(n_0)},\ldots, S_k^{(n_0)},A)$ and $\mathcal{MR}(S_1,S_2,A^{(n_0)})=\mathcal{MR}(S_1,\ldots, S_k,A^{(n_0)})$.
Then due to the definition of the $\mathcal{MR}$ function

\begin{align*}
    \mathcal{MR}(S_1^{(n_0)},S_2^{(n_0)},A^{{(n_0)}}) & \leq \mathcal{MR}(S_1^{(n_0)},\ldots, S_k^{(n_0)},A^{{(n_0)}}) & \\
    & \leq \mathcal{MR}(S_1,\ldots, S_k,A^{{(n_0)}}) & \mbox{[by Eq.~\ref{eqProofT2}]} \\
    & = \mathcal{MR}(S_1,S_2,A^{{(n_0)}}) &
\end{align*}
From the above equation and the definition of $A^{(n_0)}$ it is easy to see that the pairs of sets $(S_1^{(n_0)},S_2^{(n_0)})$ and $(S_1,S_2)$ satisfy the requirements of Theorem~\ref{theor10} which gives us that 

\begin{align*}
    \mathcal{MR}(S_1^{(n_0)},\ldots, S_k^{(n_0)},A) & = \mathcal{MR}(S_1^{(n_0)},S_2^{(n_0)},A) & \mbox{[by assumption]} \\
    & \leq (1+\varepsilon)\cdot \mathcal{MR}(S_1, S_2,A) & \mbox{[by Theorem~\ref{theor10}]}\\
    & \leq (1+\varepsilon) \cdot \mathcal{MR}(S_1,\ldots, S_k,A) \ .
\end{align*}

As for the running time we have that this algorithm begins with a for loop that runs $n$ times. In each iteration the algorithm computes $A^{(m)}$ in time $\mathcal{O}(n)$, runs the algorithm $\mathcal{SOL}_{ex,\mathcal{P}_m}$ in $poly(n,a_m')$ and finally it has to evaluate $\mathcal{MR}(S_1',\ldots,$ $S_k',A)$ and $\mathcal{MR}(S_1^*,\ldots,S_k^*,A)$. This evaluation takes time $\mathcal{O}(k^2)=\mathcal{O}(n^2)$ due to $k \leq n$ (because $S_1,\ldots,S_k$ are disjoint). So one iteration takes time $poly(n,a'_m)$. Thereby Algorithm~\ref{Alg1} runs in $poly(n,a'_m)$.  

We will prove that the $a'_m$ which we use in each iteration is polynomially bounded by $n$ and $1/\varepsilon$.
We have that $a'_m= \lfloor a_m/\delta \rfloor= \lfloor 3\cdot n \cdot a_m/\varepsilon \cdot a_m \rfloor \leq 3\cdot n /\varepsilon$. Hence the running time is $poly(n, 1/\varepsilon)$ proving that  Algorithm~\ref{Alg1} is an FPTAS for  problem $\mathcal{P}$.

\end{proof}

\noindent
\textbf{Proof of theorem~\ref{theor30}.} \textit{Algorithm~\ref{AlgTSm} runs in time $\mathcal{O}(n^2 \cdot a_m)$.}
\begin{proof}
Observe that in Algorithm~\ref{AlgTSm} we initialize our variables and we select $S_{min}$ and $S_{max}$ according to m. These selection take time $\mathcal{O}(n)$. Then we run two algorithms (Algorithm~\ref{sub2} and Algorithm~\ref{sub3}). Specifically Algorithm~\ref{sub2} runs in $\mathcal{O}(n)$ due to the fact that the cardinality of $S_{max}$ can not be greater than $n$. Furthermore in Algorithm~\ref{sub3} we fill a matrix with size $n \times 3\cdot Q \times 2$ and by using suitable data structure, we can store the sets in time (and space) $\mathcal{O}(1)$ per cell. This implies that Algorithm~\ref{sub3} runs in $\mathcal{O}(n\cdot Q)$. Last, is easy to see that $Q =a_m + \sum_{i \in S_{min}} a_{i+p}$ and $a_i\leq a_m$ for every $a_i$ this sum which gives us that $Q \leq n \cdot a_m$. So the Algorithm~\ref{AlgTSm} runs in time $\mathcal{O}(n^2 \cdot a_m)$.
\end{proof}

\begin{theorem}
The Algorithm~\ref{AlgTSm} returns an optimal solution for the semi restricted version of $2 \mbox{-}Set \ SSR$.
\end{theorem}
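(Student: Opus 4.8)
The plan is to show that, for a fixed input $(A,m)$, Algorithm~\ref{AlgTSm} implicitly ranges over every feasible solution of \emph{Semi-Restricted} $2\mbox{-}Set\ SSR$ and returns one of minimum ratio. First I would record the structure forced by the semi-restricted constraint: in any feasible $(S_1,S_2)$ exactly one of the two sets — which I will call the \emph{min side} — has maximum-weight element equal to $a_m$, while the other — the \emph{max side} — has maximum-weight element at least $a_m$. The offsets set in lines~2--6 reorient the indices so that, in both cases $m\le n$ and $m>n$, the min side draws its elements through offset $p$ (and contains $a_m$) and the max side through offset $p'$; since $\mathcal{MR}$ is symmetric in its two set arguments, it suffices to argue correctness after this normalization. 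I would then observe that every element of the min side other than $a_m$ has weight at most $a_m$ and hence lies in $S_{min}$, whereas the max side must contain at least one element of weight at least $a_m$, i.e.\ an index of $S_{max}$. The set-difference operations in lines~7--8 strip $a_m$'s own index and its conflicting partner, which — together with the fact that the algorithms process each pair exactly once and assign it to at most one side — guarantees that the returned partition is conflict-free.

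With $Q=a_m+\sum_{i\in S_{min}}a_{i+p}$ denoting the largest attainable min-side sum, I would split the feasible solutions into the two regimes the algorithm treats separately. \textbf{Case 1} consists of solutions whose max side is a single element of weight exceeding $Q$. For such a solution the max-side sum already dominates every attainable min-side sum, so the ratio equals (max-side sum)$/$(min-side sum); it is therefore minimized by taking the max side to be that one element alone (adding more only enlarges the numerator) and the min side as large as the conflict constraint allows. This is exactly what Algorithm~\ref{sub2} computes as it ranges over each candidate $i\in S_{max}$ with $a_{i+p'}>Q$, pairs it with $S_{min}\cup\{a_m\}$ minus the conflicting index, and keeps the best ratio; so Case~1 is handled exactly.

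\textbf{Case 2} is everything else, handled by the dynamic program of Algorithm~\ref{sub3}. I would prove by induction on $i$ that $T[i,d,l]$ stores, among all ways of distributing the first $i$ pairs for which (min-side sum) $-$ (max-side sum) $=d$ and $l$ records whether the max side already contains an element at least $a_m$, a tuple attaining the \emph{largest} total weight $x$; the update lines~\ref{line12},~\ref{line13},~\ref{line14} realize the choices ``leave pair $i$ out / put it on the min side / put it on the max side,'' and $\mathsf{LTST}$ enforces the max-total-weight invariant. The decisive observation — and the crux of the whole proof — is that for a fixed difference $d$ the ratio equals $(x+|d|)/(x-|d|)$ as a function of the total weight $x$, which is strictly decreasing in $x$; hence retaining only the maximum-total-weight tuple per $(d,l)$ never discards an optimum. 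Restricting the final scan to cells with $l=1$ forces the max side's maximum to be at least $a_m$, so together with the invariant that the min side's maximum equals $a_m$ the returned pair satisfies the semi-restricted constraint, and minimizing $\mathcal{MR}$ over $d$ yields the optimal Case~2 solution.

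Finally I would knit the two cases together: since Algorithm~\ref{AlgTSm} returns the better of the Case~1 and Case~2 outputs, it suffices that every feasible solution falls under one of them. I expect the main obstacle to be precisely this exhaustiveness at the boundary, namely verifying that the difference range $\{-2Q,\ldots,Q\}$ is wide enough: min-side sums never exceed $Q$, and I would argue that any solution whose max side pushes the difference below $-2Q$ either contains a single element heavier than $Q$ (and is thus already covered, and more favorably, by Case~1) or is \emph{dominated} — one can delete max-side elements, keeping $l=1$, to strictly shrink the ratio while landing in the recorded range. Establishing this domination cleanly, in tandem with the fixed-difference monotonicity of the ratio, is where the real work lies; the remaining bookkeeping (conflict removal, the role of the $l$-flag, and the correctness of $\mathsf{LTST}$) is routine.
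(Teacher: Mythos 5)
Your plan takes essentially the same route as the paper's proof: the same normalization via $p,p'$ and the structural facts about $S_{min}$ and $S_{max}$, the same case split around $Q$ (singleton heavy max side versus the dynamic program), the same reachability induction over the table, the same observation that for a fixed difference $d$ the ratio is decreasing in the total sum (which justifies the $\mathsf{LTST}$ rule), and the same domination argument bounding the difference to $\{-2Q,\ldots,Q\}$. The step you defer as ``the real work'' is exactly the paper's Lemma on that range, proved just as you sketch it: deleting a max-side element keeps the pair feasible with max side still at least $Q$, so the ratio does not increase, contradicting strict suboptimality of the truncated pair.
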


\begin{proof}

Before we start the proof we have to remark two things. The first is that generally, it's not necessary that $S_1 \subseteq  \{1,\ldots ,n\}$ and $S_2 \subseteq \{n+1,\ldots ,2 \cdot n\}$ but they may be reversed, so when we mention a feasible solution, by convention we will regard that $m \in S_1$. Secondly, in the case-algorithms inside of Algorithm~\ref{AlgTSm} we construct two variables $p$ and $p'$ such that:
\begin{align*}
& \bullet \mbox{ If }  m \in \{1, \ldots ,n\} \mbox{ then }  (p,p')=(0,n) \\
& \bullet \mbox{ If } m \in \{n+1, \ldots ,2\cdot n\} \mbox{ then } (p,p')=(n,0)
\end{align*}
With the use of these variables we will prove some properties for the indices of the feasible solutions.

\begin{lemma}
Let $(S_1,S_2)$ be a feasible solution of the problem with input $(\{a_1,\ldots,$ $ a_{2 \cdot n}\},m)$, the set $S = \{1,\ldots ,n\}$ and $(p,p')$ the variables we defined above, then:
\begin{align}
    & \bullet \ S_1 \subseteq \{i+p\mid i \in S\}
     \label{s1indexs} \\
    & \bullet \ S_2 \subseteq \{i+p'\mid i \in S\}
     \label{s2indexs} \\
    & \bullet \mbox{ for an index } j\in S_1 \mbox{ then } j-p+p' \notin S_2  \label{relation-s1s2} \\
    & \bullet \mbox{ for an index } j\in S_2 \mbox{ then } j-p'+p \notin S_1 \label{relation-s2s1}
\end{align}

\end{lemma}

\begin{proof}
Because the feasible solutions $(S_1,S_2)$ are in such order so $m \in S_1$ and if we keep in mind that one of these sets is subset of $\{1,\ldots ,n\}$ and the other is subset of $\{n+1,\ldots ,2 \cdot n\}$ we have that\\
$\bullet$ If $m \in \{1, \ldots ,n\}$ then $S_1 \subseteq \{1,\ldots ,n\}$, $S_2 \subseteq \{n+1,\ldots ,2 \cdot n\}$ and $(p,p')=(0,n)$\\
$\bullet$ If $m \in \{n+1, \ldots ,2\cdot n\}$ then $S_2 \subseteq \{1,\ldots ,n\}$, $S_1 \subseteq \{n+1,\ldots ,2 \cdot n\}$ and $(p,p')=(n,0)$ \\ 
Without loss of generality we assume that $m \leq n$, this means that $p=0$,
$$S_1 \subseteq \{1,\ldots ,n\} = \{i +p \mid i \in S\}$$
and because $p'=n$ 
$$ S_2 \subseteq \{n+1,\ldots ,2 \cdot n\} = \{i +p' \mid i \in S\} \ .$$
It remains to prove the relations between the indices of the two sets. By the definition of the problem for any feasible solution $(S_1,S_2)  \nexists\ (i,j)$ such that $i \in  S_1$, $j \in  S_2$ and $i \equiv j\  ( mod \ n) $. Because $ p,p' \in \{0,n\}$ we have that 
$ j \equiv j-p+p' \equiv j-p'+p \ (mod \ n) $ so both of the last two properties holds.
\end{proof}

\begin{lemma}
Let $(S_1,S_2)$ be a feasible solution for the problem with input $(\{a_1,\ldots $ $,  a_{2 \cdot n}\},m)$ and $S_{min}$, $S_{max}$ the sets as they are defined in Algorithm~\ref{AlgTSm}, then the following are true:
\begin{align}
    &  \exists\  i \in S_{max} \mbox{ such that } i+p' \in S_2 \label{s2smax} \\
    & S_1 \subseteq \{i+p \mid i \in S_{min}\} \cup \{m\}  \label{s1smin}
\end{align}
\end{lemma}

\begin{proof}
We will start with the definitions of $S_{min}$ and $S_{max}$.
\begin{align*}
    & S_{min} =\{ i \mid  i \in \{1,\ldots ,n \} \mbox{ and } a_{i+p} \leq a_m\} \smallsetminus \{m - p\} \\
    & S_{max} =\{ i \mid  i \in \{1,\ldots ,n \} \mbox{ and } a_{i+p'} > a_m\} \smallsetminus \{m -p + p'\} \ .
\end{align*}
For any feasible solutions of the semi restricted version of $2 \mbox{-}Set \ SSR$ we know that $a_m = \max_{i \in S_1} a_i$ and $a_m \leq \max_{i \in S_2} a_i $. 
Let $ j \in S_{2} \mbox{ such that } a_j \geq a_m $ then by relation~\ref{s2indexs}, we have that
 $\exists\  i \in \{1,\ldots ,n \} \mbox{ such that } i+p' =j \mbox{, } a_{i + p'} \geq a_m \mbox{ and } i \neq m -p+p' $ so $i \in S_{max}$ (by its definition). So relation~\ref{s2smax} holds.
Now by considering the relation~\ref{s1indexs}, if $j \in S_{1}\smallsetminus\{m\}$ then $ \exists \ i \in \{1,\ldots ,n \}$ such that $ i+p =j \mbox{, } a_{i + p} \leq a_m \mbox{ and } j \neq m $ which means that $i \in S_{min} $
so the relation~\ref{s1smin} holds. Thus the lemma holds.
\end{proof}

Now, let $(S^*_1, S^*_2)$ be an optimal solution for the semi restricted version of $2 \mbox{-}Set  \ SSR$ with input $A=\{a_1,\ldots a_{2 \cdot n} \}$ and $m$. Without lost of generality let 
$$\max_{i \in S^*_1}\{a_i\}=a_{m_1}=a_m< a_{m_2}=\max_{j \in S^*_2}\{a_j\}$$
this means that the sets appear in the same order as if they were constructed from the algorithm.
For this optimal solution we have two cases, either $a_{m_2} > Q$ or $a_{m_2} \leq Q$ (where $Q \leftarrow a_m + \sum_{i \in S_{min}} a_{i+p}$ as it is defined in case-algorithms of Algorithm~\ref{AlgTSm}).

$\bullet$ Case 1 ($a_{m_2} > Q$): In this case we will return a solution with ratio equal to the optimal using Algorithm~\ref{sub2}. By relation~\ref{s2indexs} we know that there exists $ m_0 \in \{1,\ldots,n\}$ such that the index $m_0+p'=m_2$. With the additional knowledge that $a_{m_2} > Q$ we have that $m_0 \in S_{max}$. We claim that the pair of sets $(S,\{m_2\})$, where 
$S=\{i+p \mid i \in S_{min} \mbox{ or } i=m-p \} \smallsetminus \{m_{0}+p \}$ is an optimal solution for this case. 
In order to prove this claim we need to observe that for any feasible solution $(S_1,S_2)$ for this case we have
$S_1 \subseteq (\{i+p \mid i \in S_{min}\} \cup \{m\}) \smallsetminus \{ m_2 - p' +p\} $ and $S_2 \supseteq \{m_2\}$.  So for any feasible solution for this case we have:
\begin{align*}
\mathcal{MR}(S_1,S_2,A) & = \mathcal{R}(S_2,S_1,A) \geq \mathcal{R}(\{m_2\},S_1,A) \\ & \geq 
\mathcal{R}(\{m_2\},(\{i+p \mid i \in S_{min}\} \cup \{m\}) \smallsetminus \{ m_2 - p'+p\} ,A)
\end{align*}
which proves the claim. Due to this fact, Algorithm~\ref{sub2} returns $(S,\{m_2\})$ or a pair of sets with the same max ratio.

$\bullet$  Case 2 ($a_{m_2} < Q$): The first thing we have to prove in this case is the following lemma,
\begin{lemma} \label{lemmadif}
If $(S'_1,S'_2)$ is a feasible solution for this case and $(S_1,S_2)$ is a pair of sets such that $S_1 \subseteq S'_1$ and $S_2 \subseteq S'_2$ then:
\begin{align*}
    -2\cdot Q \leq \sum_{i \in S_1} a_i - \sum_{j \in S_2} a_j \leq Q
\end{align*}
\end{lemma}
\begin{proof}
By relation~\ref{s1smin} it is obvious that $\sum_{i \in S_1} a_i \leq Q$ so we need prove that $$\sum_{j \in S_2} a_j \leq 2\cdot Q \ .$$
Let's assume that $\sum_{j \in S_2} a_j > 2\cdot Q$ then because $ Q \geq a_{m_2} = \max_{i \in S'_2}\{a_i\}$ and $S_2 \subseteq S'_2$, we have that $S_2$ should contain at least 3 indices and the same holds for the set $S'_2$. Let $m_0 \neq m_2$ be one of them, then because $Q \geq a_{m_0} $ and $\sum_{j \in S'_2} a_j \geq \sum_{j \in S_2} a_j > 2\cdot Q$ we have that
$$\sum_{j \in S'_2} a_j > \sum_{j \in S'_2 \smallsetminus \{m_0\}} a_j> Q $$
and because $\sum_{i \in S'_1 } a_i \leq Q $  
\begin{align*}
    1 \leq \mathcal{MR}(S'_1,S'_2 \smallsetminus \{m_0\},A) \leq  \mathcal{MR}(S'_1,S'_2,A)
\end{align*}
which is a contradiction.
\end{proof}

With the next lemma we will prove that for any feasible solution $(S_1,S_2)$ the cell $T[n,d,1]$
where $d= \sum_{i \in S_1}a_i -\sum_{i \in S_2}a_i$ is non empty. Furthermore the sets which are stored in this cell have max ratio at most $\mathcal{MR}(S_1,S_2,A)$. 
\begin{lemma} \label{lemma_not_empty}
Let  $(S^*_1,S^*_2)$ is a feasible solution for this case and $(S_1,S_2)$ is a pair of sets such that $S_1 \subseteq S^*_1$, $S_2 \subseteq S^*_2$  and $m \in S_1$. We define $m_0 = \max \{0, i \mid i+p \in S_1 \smallsetminus \{m\} \mbox{ or } i+p' \in S_2\} $ and $d= \sum_{i \in S_1}a_i -\sum_{i \in S_2}a_i$.
\begin{align*}
    &\mbox{ If } S_2\cap \{i+p' \mid i \in S_{max} \} =\emptyset &\mbox{ then the cell } T[m_0,d,0] \neq (\emptyset,\emptyset,0)\\
    &\mbox{ else } &\mbox{ then the cell } T[m_0,d,1] \neq (\emptyset,\emptyset,0)
\end{align*}
\end{lemma}

\begin{proof}
For any pair $(S_1,S_2)$ (as these described at the lemma) we define the following set
 $$ S_{S_1,S_2}=\{0\} \cup \{i \ mod \ n  \mid i+p \in S_1 \mbox{ or } i+p' \in S_2)  \} \smallsetminus\{m -p\} \ . $$
We have to notice that $i \in \{1,\ldots,n\}$ if $i+p \in S_1$ (by relation~\ref{s1indexs}) and the same holds if $i+p' \in S_2$. Because of this we know that $S_{S_1,S_2} \subseteq \{0,\ldots ,n\}\smallsetminus\{m -p\}$.
Now, we will prove this lemma by using strong induction to the maximum element of the set $S$.\\ 
$\bullet$ If $\max\{S_{S_1,S_2}\}=0$ (base case)\\
Because we have requested $m\in S_1$ and by the fact that $\max\{S_{S_1,S_2}\}=0$ we can conclude that  $S_2 \cap \{i+p' \mid i \in S_{max} \}= \emptyset$ and $(S_1,S_2) = (\{m\},\emptyset)$, which is the pair of sets the algorithm stores in the cell $T[0, a_m,0]$. This concludes the base case.\\
$\bullet $ Assuming that lemma's statement holds for all the indices $k'$ which are smaller or equal than a specific index $k<n$, we will now prove it for $k+1$.\\
$\bullet$ Let $k+1 = \max\{S_{S_1,S_2}\}$ this means that  $k+1+p \neq m$ and either $k+1+p \in S_1$ or $k+1+p' \in S_2$. So we have to check both cases.\\
\textbf{Case A $(k+1+p \in S_1)$.} In this case for the pair of sets $(S_1',S_2) = (S_1 \smallsetminus \{k+1+p\}, S_2)$ meets the conditions of the induction because  $\max\{S_{S_1',S_2}\} <\max\{S_{S_1,S_2} \smallsetminus \{k+1\}\} $. So for $d= \sum_{i \in S_1'}a_i - \sum_{i \in S_2}a_i $  we know that:
\begin{align*}
    &\mbox{ either the cell} & T[\max\{S_{S_1',S_2}\},d,0 ]\neq (\emptyset,\emptyset,0) \mbox{ if } S_2\cap \{i+p' \mid i \in S_{max} \}=\emptyset & \\
    &\mbox{ or the cell} & T[\max\{S_{S_1',S_2}\},d,1 ]\neq (\emptyset,\emptyset,0) \mbox{ if } S_2\cap \{i+p' \mid i \in S_{max} \} \neq \emptyset & \\
    & & \mbox{ (respectively)}
\end{align*}
Algorithm~\ref{sub3} moves up all the cells (line 13). This means that the cell $T[k,d,0 ]$ (resp. $ T[k,d,1]$) is non equal to $(\emptyset,\emptyset,0)$ in the case $S_2\cap S_{max}=\emptyset$ (resp. $S_2\cap S_{max} \neq \emptyset$). Because $k+1 \in S_{min}$ (by relation~\ref{s1smin} and $k+1+p \in S_1$) then Algorithm~\ref{sub3} fills the cell $T[k+1,d+a_{k+1+p},0]$ (resp. $T[k+1,d+a_{k+1+p},1]$) in line 16. This proves the case A\\
\textbf{Case B $(k+1+p' \in S_2)$.} Here we have two extra cases, either $k+1 \in S_{max}$ or not.\\
\textbf{Case B.1 $(k+1 \notin S_{max})$.}
Like in the previous case, the pair of sets $(S_1,S_2') = (S_1 , S_2\smallsetminus \{k+1+p'\})$ meets the conditions of the induction because  $\max\{S_{S_1,S_2'}\} <\max\{S_{S_1,S_2} \smallsetminus \{k+1\}\} $. So for $d= \sum_{i \in S_1}a_i - \sum_{i \in S_2'}a_i $ we know that:
\begin{align*}
    &\mbox{ either the cell} & T[\max\{S_{S_1,S_2'}\},d,0 ]\neq (\emptyset,\emptyset,0) \mbox{ if } S_2'\cap \{i+p' \mid i \in S_{max} \}=\emptyset & \\
    &\mbox{ or the cell} & T[\max\{S_{S_1,S_2'}\},d,1 ]\neq (\emptyset,\emptyset,0) \mbox{ if } S_2'\cap \{i+p' \mid i \in S_{max} \} \neq \emptyset & \\ 
    & & \mbox{ (respectively)}
\end{align*}
As we have said before, Algorithm~\ref{sub3} moves up all the cells (line 13). This means that the cell $T[k,d,0 ]$ (resp. $ T[k,d,1]$) is non equal to $(\emptyset,\emptyset,0)$ in the case of $S_2'\cap \{i+p' \mid i \in S_{max} \}=\emptyset$ (resp. $S_2'\cap \{i+p' \mid i \in S_{max} \} \neq \emptyset$) and because we know that $d -a_{k+1+p'} \geq - 2\cdot Q$ (by lemma~\ref{lemmadif}) in the line 22 the algorithm fills the wanted cell $T[k,d -a_{k+1+p'},0 ]$ (resp. $T[k,d -a_{k+1+p'},1 ]$).\\
\textbf{Case B.2 $(k+1 \in S_{max})$.} This case is similar to Case B.1 with the exception that the algorithm ensures that the algorithm fills the cell
$T[k,d -a_{k+1+p'},1 ]$ in line 20 if either of the $T[k-1,d,0 ]$ or $T[k-1,d,1 ]$ are non-empty.
\end{proof}

To complete the proof we have to observe three things.\\
First, in any cell Algorithm~\ref{sub3} keeps the pair of sets with the greater total sum.\\ Second, For all the pairs of sets $(S_1,S_2)$ which are stored we have $S_1 \subseteq \{i+p \mid i \in S_{min}\}$, $S_2 \subseteq \{i+p' \mid i \in \{1,\ldots,n\}\}$ and $\nexists (j,j')$ such that $j \in S_1$, $j' \in S_2$ and $j \equiv j' \ (mod \ n)$ (because we use only $i+p$ in $S_1$ or $i+p'$ in $S_2$ every time). 
Third, in order to store a pair of sets in any cell $T[i,d,1]$ we have either add a $i+p'$ with $i \in S_{max}$ to the $S_2$ or use an other cell $T[i',d',1]$ (which already have such an index in $S_2$) so these pairs are feasible solutions for this case.\\
With all that in mind we know that, if $(S^*_1,S^*_2)$ be an optimal solution for the problem then for $d = \sum_{i \in S^*_1}a_i - \sum_{i \in S^*_2}a_i $ ($d \in \{-2\cdot Q ,\ldots ,Q\}$ by the lemma~\ref{lemmadif}) a cell $T[i,d,1] \neq (\emptyset,\emptyset,0)$ due to lemma~\ref{lemma_not_empty} and relation~\ref{s2smax} so the same holds for the $T[n,d,1]$ (because the algorithm moves up all the cells). Let $(S_1,S_2)$ pair of sets stored in that specific cell. As we mentioned earlier the pair $(S_1,S_2)$ is a feasible solution such that:
$$ \sum_{i \in S_1}a_i -\sum_{i \in S_2}a_i =  \sum_{i \in S^*_1}a_i -\sum_{i \in S^*_2}a_i $$
and
$$\sum_{i \in S_1}a_i +\sum_{i \in S_2}a_i \geq  \sum_{i \in S^*_1}a_i +\sum_{i \in S^*_2}a_i $$
Now, because the differences of the sums of the pairs are the same, it is easy to see that the pair with the smaller max ratio is the one with the greater total sum. So, because we can not have smaller max ratio than the optimal this means that the stored pair is an optimal one. Thus the Algorithm~\ref{AlgTSm} returns the optimal solution in both of cases.
\end{proof}

\end{section}

\end{document}